  \providecommand\BibTeX{{%
    \normalfont B\kern-0.5em{\scshape i\kern-0.25em b}\kern-0.8em\TeX}}}
\DeclareMathOperator*{\argmin}{arg\,min}
\newtheorem{definition}{Definition}
\newtheorem{theorem}{Theorem}
\newcommand{\ds}{ACSR}
\newcommand{\cublas}{cuBLAS}
\newcommand{\cusparse}{cuSPARSE}
\newcommand{\sddmm}{R-SDDMM}
\newcommand{\spmm}{R-SpMM}
\newcommand{\triton}{Triton}
\begin{document}

\title{SPLAT: A framework for optimised GPU code-generation for SParse reguLar ATtention}

\author{Ahan Gupta}
\affiliation{%
  \institution{UIUC}
  \city{Urbana}
  \state{Illinois}
  \country{USA}
}
\author{Yueming Yuan}
\affiliation{%
  \institution{UIUC}
  \city{Urbana}
  \state{Illinois}
  \country{USA}
}
\author{Devansh Jain}
\affiliation{%
  \institution{UIUC}
  \city{Urbana}
  \state{Illinois}
  \country{USA}
}
\author{Yuhao Ge}
\affiliation{%
  \institution{UIUC}
  \city{Urbana}
  \state{Illinois}
  \country{USA}
}
\author{David Aponte}
\affiliation{%
  \institution{Microsoft}
  \city{Seattle}
  \state{Washington}
  \country{USA}
}
\author{Yanqi Zhou}
\affiliation{%
  \institution{Google DeepMind}
  \city{San Francisco}
  \state{California}
  \country{USA}
}
\author{Charith Mendis}
\affiliation{%
  \institution{UIUC}
  \city{Urbana}
  \state{Illinois}
  \country{USA}
}


\begin{abstract}

Multi-head-self-attention (MHSA) mechanisms achieve state-of-the-art (SOTA) performance across natural language processing and vision tasks. However, their quadratic dependence on sequence lengths has bottlenecked inference speeds. To circumvent this bottleneck, researchers have proposed various sparse-MHSA models, where a subset of full attention is computed. Despite their promise, current sparse libraries and compilers do not support high-performance implementations for \emph{diverse} sparse-MHSA patterns due to the underlying sparse formats they operate on. These formats, which are typically designed for high-performance \& scientific computing applications, are either curated for extreme amounts of random sparsity (<1\% non-zero values), or specific sparsity patterns. However, the sparsity patterns in sparse-MHSA are moderately sparse (10-50\% non-zero values) and varied, resulting in existing sparse-formats trading off generality for performance. 

We bridge this gap, achieving both generality and performance, by proposing a novel sparse format: affine-compressed-sparse-row (ACSR) and supporting code-generation scheme, SPLAT, that generates high-performance implementations for diverse sparse-MHSA patterns on GPUs. Core to our proposed format and code generation algorithm is the observation that common sparse-MHSA patterns have uniquely regular geometric properties. These properties, which can be analyzed just-in-time, expose novel optimizations and tiling strategies that SPLAT exploits to generate high-performance implementations for diverse patterns. To demonstrate SPLAT's efficacy, we use it to generate code for various sparse-MHSA models, achieving geomean speedups of 2.05x and 4.05x over hand-written kernels written in \triton{} and TVM respectively on A100 GPUs. Moreover, its interfaces are intuitive and easy to use with existing implementations of MHSA in JAX.
\end{abstract}

\begin{CCSXML}
<ccs2012>
<concept>
<concept_id>10010147.10010169.10010170.10010174</concept_id>
<concept_desc>Computing methodologies~Massively parallel algorithms</concept_desc>
<concept_significance>500</concept_significance>
</concept>
</ccs2012>
\end{CCSXML}

\ccsdesc[500]{Computing methodologies~Massively parallel algorithms}

\begin{CCSXML}
<ccs2012>
   <concept>
       <concept_id>10010520.10010521.10010528.10010536</concept_id>
       <concept_desc>Computer systems organization~Multicore architectures</concept_desc>
       <concept_significance>500</concept_significance>
       </concept>
 </ccs2012>
\end{CCSXML}

\ccsdesc[500]{Computer systems organization~Multicore architectures}

\received{20 February 2007}
\received[revised]{12 March 2009}
\received[accepted]{5 June 2009}

\maketitle

\section{Introduction}
\label{Introduction}
Transformers have been widely adopted across various domains, powering popular applications like ChatGPT, Gemini Pro, and Claude, which handle millions of queries per day \cite{scale}. To effectively train and serve models at this scale, transformers must: (1) have high model quality by accurately answering user queries, and (2) utilize manycore GPU architectures effectively by repeatedly executing high-throughput kernels on these machines. However, achieving both simultaneously is challenging as datasets and tasks \cite{lra, scrolls} are demanding increasingly longer input sequences. This increases the memory consumption of multi-head-self-attention (MHSA) layers, which increases quadratically with respect to input sequence lengths and reduces the largest permissible batch size (LPBS) a model can operate on. Since sequences across batches are independent and can be processed in parallel, large models operating on long contexts are forced to use small batches and do not realize their high-throughput potential despite being embarrassingly parallel. 

To mitigate the memory bottleneck of MHSA, researchers have proposed several sparse-MHSA methods \cite{longformer, sparse-transformer, reformer, gemma-two, mistral}. These methods compute a subset of the entire attention matrix using a statically fixed mask. However, unlike the sparsity levels encountered in widely studied scientific and high-performance computing applications~\cite{SuiteSparse, sparsebench} which are extremely sparse (<1\% of the values are non-zero), state-of-the-art (SOTA) sparse-MHSA methods are \textit{moderately} sparse, computing 10-50\% of the full attention matrix. Computing fewer values degrades model quality while computing more values consumes additional memory, reducing the LPBS of models \cite{gemma-two, mistral}. The moderate sparsity ranges of sparse-MHSA place unique challenges in adopting existing sparse formats 
to implement high-performance kernels. 

\begin{wrapfigure}[19]{R}{0.5\textwidth}
    \includegraphics[width=\linewidth]{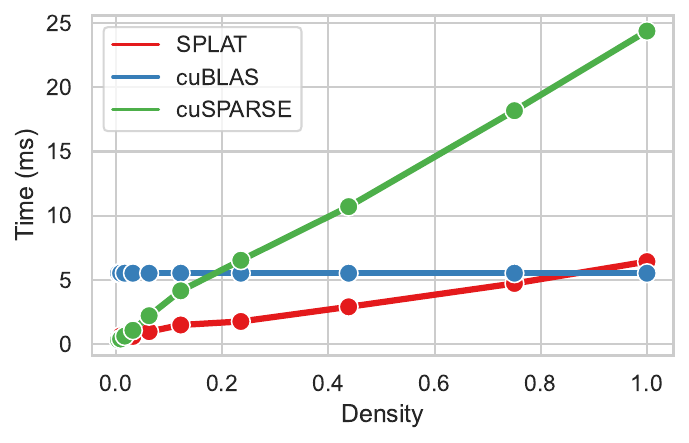}
    \caption{Run-time results for a sparse primitive used in sparse-MHSA (\spmm{})  comparing \cusparse{}, \cublas{} and SPLAT. We vary the density of the sparse input across: [0.4, 0.8, 1.6, 3, 6, 12, 24, 44, 75, 100]. The sparse input takes the shape of the blocked pattern (figure \ref{fig:regular-sparsity} right).}
    \label{fig:teaser-figure}
\end{wrapfigure}

On one end, general sparse libraries \cite{taco, loop-carried-sparse-dependencies, register-tiling-sparsity, cusparse} employ \textit{general sparse formats} (GSFs) that are designed to store extremely sparse matrices (<10\% of the values are non-zero). As a result, such formats, like the compressed-sparse-row (CSR) and coordinate (COO) formats, are augmented with metadata that represents the dense coordinates of each non-zero value and consumes memory in $O(f(nnzs))$. To correctly index a non-zero value stored in a GSF, its respective metadata must additionally be brought up the memory hierarchy, doubling the data read from high-bandwidth memory within the inner loops of sparse primitives. However, since sparse-MHSA layers are moderately sparse with SOTA approaches producing megabytes of non-zero values per layer \cite{mistral, gemma-two}, doubling the data read from high bandwidth memory (HBM) exacerbates contention of scarce L1 caches and register-file resources of the GPU, inhibiting per-iteration performance. As we see in figure \ref{fig:teaser-figure}, even hand-optimized vendor libraries like \cusparse{} \cite{cusparse} that employ the CSR format are outperformed by their dense counterparts, \cublas{} \cite{cublas}, for density levels as low as 20\% despite doing 1/5th of the compute.

On another end, to gain sizable latency and throughput benefits over full-attention, practitioners opt to hand-write specialized kernels with \textit{custom data formats} (CSFs). Such formats are designed with particular sparsity structures in mind, permitting format designs with lower metadata storage. Moreover, practitioners can analyze the sparsity structure to implement favorable thread access patterns that enhance cache reuse, memory coalescing, and launch kernels with high occupancy to facilitate inter-warp latency hiding. For example, triton's block-sparse kernels \cite{block-sparse} use blocks of predefined sizes as well as a look-up-table to map a block to the set of dense coordinates it operates on. Nevertheless, such an approach couples the performance of kernels to particular patterns at particular sparsity ranges. Attempting to adapt a kernel to a different pattern results in redundant storage and reduces the LPBS of models. 
Unlike the GSFs employed by general sparse libraries, CSFs lack generality and do not scale in the fast-evolving field of sparse-MHSA where new patterns \cite{sparse-transformer, long-lora, mistral, gemma-two} rapidly emerge.


We observe that no general data format facilitates high-performance implementations for various sparse-MHSA patterns. GSFs require $O(f(nnzs))$ metadata storage to permit generality at the cost of performance while CSFs reduce metadata storage to permit performance at the cost of generality. We plug this gap and propose a novel data-format: affine-compressed-sparse-row (ACSR) and a GPU code generation framework, SPLAT (\textbf{SP}arse regu\textbf{L}ar \textbf{AT}tention), that produces high-performance sparse-MHSA implementations for a variety of sparse-MHSA patterns. Core to our methodology is the observation that sparse-MHSA patterns are regular. We mathematically formalize \textit{regular sparsity} and observe a novel geometric property of regularly sparse inputs: affine-compressibility. We exploit this property to design the \ds{} format with compressed metadata storage in $O(rows)$, an asymptotic reduction over $O(f(nnzs))$ compared to GSFs, while precisely representing \emph{diverse} sparse-MHSA patterns compared to CSFs. Consequently, the unique design of the \ds{} and layout in memory requires novel arithmetic and thread-access patterns to reference data, rendering many conventional sparse-optimizations that operate on existing formats \cite{loop-carried-sparse-dependencies, aspt, sparse-gpu-kernels-dl} in-applicable.
Therefore, we introduce novel GPU optimizations in SPLAT that collectively reason about thread divergence, memory coalescing, redundant compute, and cache reuse to generate high-performance implementations of sparse primitives that operate on the \ds{}.

To demonstrate SPLAT's efficacy, we implement 3 widely used sparse-MHSA patterns at various sparsity levels. SPLAT-generated SDDMM and SpMM kernels, two core primitives of sparse-MHSA, outperform highly-optimized vendor libraries cuSPARSE and cuBLAS at moderate sparsity levels by 2.81 \& 5.61x respectively. Moreover, SPLAT's end-to-end generated sparse-MHSA outperforms handwritten kernels in triton and TVM by up to 2.05x and 4.05x respectively. 

In summary, this paper makes the following contributions: 
\begin{itemize}[noitemsep, nolistsep]
    \item We investigate the geometric properties of sparse-MHSA patterns and observe their \textit{regularity}. We leverage this to propose a novel sparse-format: affine-compressed sparse-row (ACSR) to represent regularly sparse data with asymptotically low metadata storage requirements (Sections \ref{Overview} \& \ref{acsr}). 
    \item We develop novel optimized GPU code-generation schemes for regularly sparse kernels, what we term as \sddmm{} and \spmm{} kernels, that use the ACSR format. These code-generation schemes introduce novel tiling mechanisms with \textit{strong theoretical guarantees} on redundant compute, thread-divergence, memory access coalescing, and cache reuse for \sddmm{} kernels as well as novel optimizations that reduce redundant compute and produce favorable memory access/write patterns for \spmm{} kernels (Sections \ref{section:sddmm} \& \ref{spmm}).
    \item We use the optimized sparse operations to provide GPU code-generation strategies for end-to-end sparse-MHSA that reason about data-layout conversion costs between kernels to achieve globally efficient models (Section \ref{splat}). 
    \item We implement these code generation schemes in a framework which we name SPLAT. We conduct an extensive evaluation comparing a variety of sparse-MHSA models implemented in SPLAT to hand-optimized kernels released in Triton and TVM, achieving geomean speedups of 2.05x and 4.05x respectively, across a variety of sparsity levels. This shows that SPLAT can simultaneously achieve both generality and high-performance (Section \ref{Evaluation}).
\end{itemize}

\section{Background}
\label{Background}
\textbf{Full Attention} The backbone of the transformer is multi-head-self-attention (MHSA) \cite{full-attention}. MHSA computes the following matrix: $Concat(Head_1, Head_2, ..., Head_h)$ where $Head_i$ is: \\ \noindent$\underbrace{\emph{softmax}(QW_i^Q(KW_i^K)^T)}_{A_i}VW_i^V \in \mathbb{R}^{N \times d_m/h}$ and the concatenation happens across the columns of $A_iVW_i^V$. The matrices $Q$, $K$ \& $V$ $\in \mathbb{R}^{N \times d_m}$ are the input matrices consisting of $N$ vectors of size $\mathbb{R}^{d_m}$, where $N$ is the input sequence length. $W_i^Q$, $W_i^K$ and $W_i^V \in \mathbb{R}^{d_m \times d_h}$ are linear transformations. The matrix $A_i$ is known as the attention matrix and the softmax is taken row-wise in the product $QW_i^Q(KW_i^K)^T$. Self-attention is expensive due to the matrix $A_i$ being of size $O(N^2)$. 

\textbf{Sparse Attention} To alleviate the quadratic computation in self-attention. Researchers have proposed a variety of \textit{sparsification} techniques to reduce the size and memory of computing $A_i$. These techniques compute some subset of the values of $A_i$ controlled by a mask matrix $M$, reducing the runtime of MHSA \cite{longformer, sparse-transformer} by computing: 
\begin{equation} \label{eqn:sparse-attention}
    \underbrace{\underbrace{[\emph{softmax}(\overbrace{M \otimes QW_i^Q(KW_i^K)^T)]}^{\sddmm{}}}_{A^s_i}VW_i^V}_{\spmm{}}
\end{equation} where mask $M$ is a mask of 0s and 1s and $\otimes$ is a pair-wise product. The product: $M \otimes (QW_i^Q(KW_i^K)^T)$ in traditional sparse computing terminology is a sampled dense dense matrix multiplications (SDDMM), whilst the product: $A^s_iVW_i^V$ is a sparse matrix dense matrix multiplication (SpMM). However, compared to the sparsity levels studied in sparse computing literature, $M$ is both moderately sparse and regular. Hence we term these operations appropriately prefixed with $regular$ as \sddmm{} and \spmm{} respectively. We define regularity in section \ref{Overview}.

\textbf{Sparse-MHSA Patterns} A variety of sparse transformers have been proposed in the literature \cite{longformer, longshort, sparse-transformer, reformer}. For example, the strided and windowed pattern (figure \ref{fig:regular-sparsity} left and middle) which implement Longformer (written in TVM) \cite{longformer, tvm}, and the blocked pattern (figure \ref{fig:regular-sparsity} right) which implements Reformer (written in JAX), and sparse-transformer (written in \triton{}) \cite{triton, jax, sparse-transformer}. SPLAT generates high-performance GPU code for all the patterns, avoiding the cumbersome need to hand-write complicated kernels.  

\begin{wrapfigure}[11]{R}{0.5\textwidth}
    \includegraphics[width=0.48\textwidth]{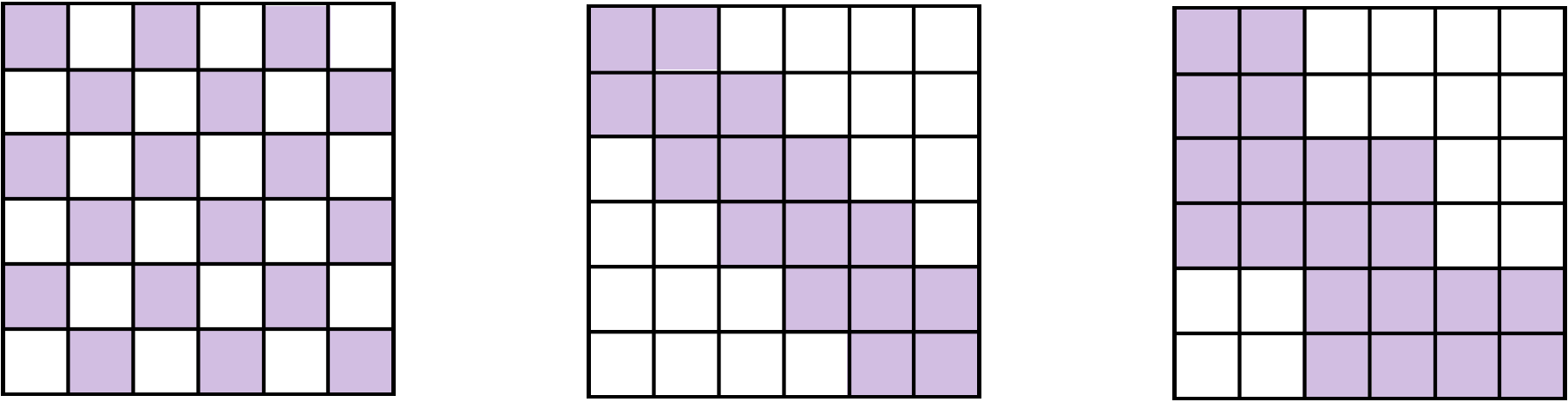}
    \caption{Examples of 3 commonly occurring sparse-MHSA patterns in the literature. Strided (left figure), Windowed (middle figure) \cite{longformer}, Blocked (right figure) \cite{sparse-transformer, reformer}. Full attention computes all points}
    \label{fig:regular-sparsity}
\end{wrapfigure}

\textbf{Sparse Formats} To obtain memory savings and performance benefits, sparse-kernels operate on data structures that only store the non-zero values in a sparse matrix. These data structures consist of non-zero values and their respective metadata. The metadata indicates the index of the trailing and leading dimension of a non-zero value. For example, the compressed-sparse-row (CSR) representation in figure \ref{fig:example-csr-triton} (b) contains the \emph{rowPtr} and \emph{colInd} arrays, indicating the leading and trailing dimensions of non-zero values in the \emph{values} array. Many sparse formats have been proposed in the literature including: COO, CSC, BCSR, ELLPACK, DIA, CSF \cite{sparse-formats,spmm-survey}, to name a few. We introduce a custom format: affine-compressed-sparse-row (\ds{}) (see section \ref{acsr}) that leverages the regularity in sparse-MHSA for performance benefits and memory savings.  

\textbf{Graphics Processing Units} GPUs are programmable single-instruction-multiple-thread machines. GPU kernels comprise of threads, with multiple threads constituting a warp, which operates in SIMD lock-step fashion. Multiple warps constitute a thread-block, which shares programmable L1 caches. Three important factors that reduce the performance of GPU kernels. (1) Thread-divergence: when threads within a warp have different control flow. (2) Un-coalesced memory requests: when consecutive threads within a warp do not access contiguous memory locations. (3) Low L1 cache re-use: when threads within a block do not reuse values in programmable L1 caches. For a greater understanding of GPU architectures see \cite{cuda-book}.

\newpage
\section{Motivation}
\label{Motivation}
\begin{figure}[t]
    \centering
    \includegraphics[width=\linewidth]{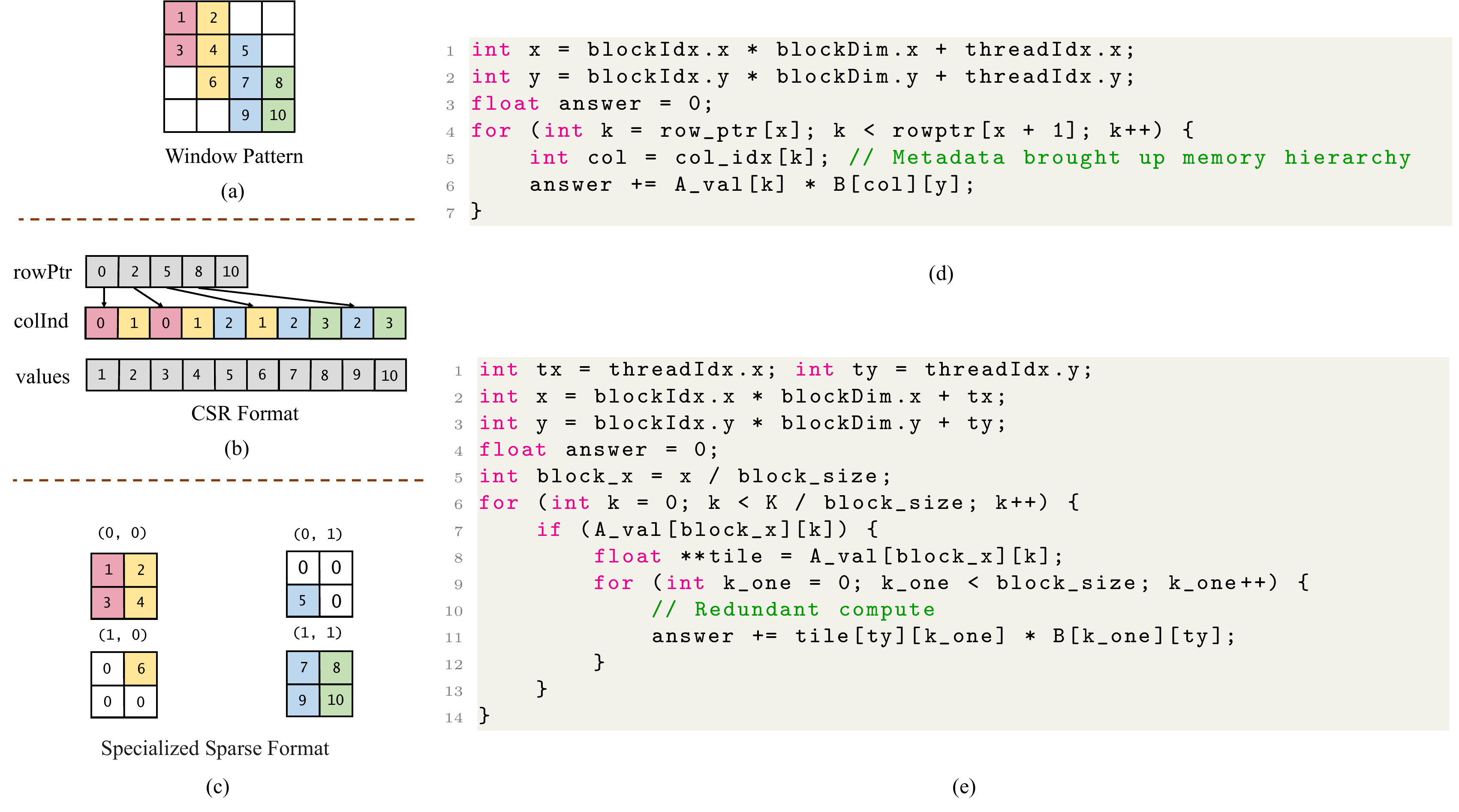}
    \caption{A comparison between SpMM implementations that use the CSR format (b), and a specialized format (c). (d) and (e) are naive SpMM implementations of $C=AB$, when $A$ is represented as a CSR and the specialized format of (c), respectively.}
    \label{fig:example-csr-triton}
\end{figure}

\begin{wrapfigure}[18]{r}{0.5\textwidth}
    \includegraphics[width=0.48\textwidth]{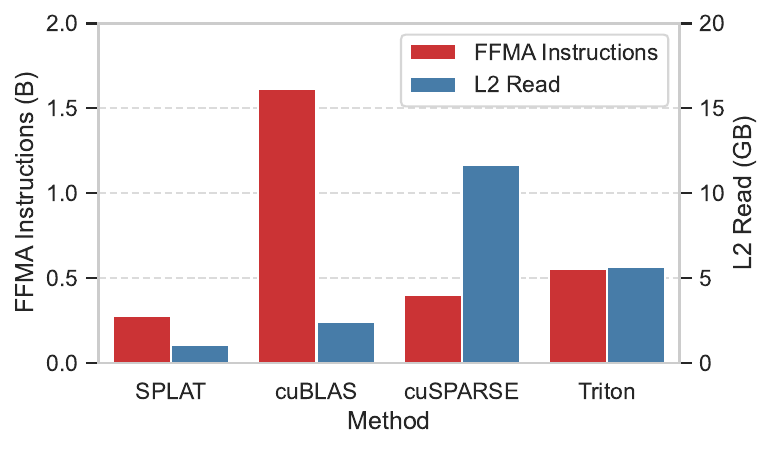}
    \caption{Profile of \spmm{} sparse-primitive implemented in SPLAT, \cublas{}, \cusparse{} and \triton{}. Matrices are 1024x1024 with sparse matrices in the window format (see figure \ref{fig:regular-sparsity} - middle) at 24\% density. FFMA is an FP32 fused multiply-add instruction and L2 read is the amount of data-traffic (in GB) from L2 to L1 cache. Lower is better.}
    \label{fig:motivational-study}
\end{wrapfigure}


Since the performance characteristics of sparse kernels are coupled with the data formats they operate on, we investigate the implications of using \textit{general sparse formats} (GSFs) and \textit{custom sparse formats} (CSFs) in the moderately sparse context of sparse-MHSA (with 10-50\% of the non-zero values computed). Consider the SpMM kernel, $C=AB$ where $A$ is a sparse tensor used in the last step of sparse-MHSA. When $A$ contains the windowed non-zero pattern (see figure \ref{fig:example-csr-triton} (a)) at a density level of 24\%, GSFs and CSFs manifest sub-optimal memory and compute profiles when used in either general sparse libraries or hand-written kernels. 

\textbf{General sparse formats lack performance.} GSFs are designed to store random \& extremely sparse patterns (with <1\% of the values computed) by storing metadata for \textit{each} non-zero value, occupying \emph{O(f(nnzs))} space. However, their metadata storage results in considerable data moved through the GPU memory hierarchy and results in sub-optimal memory profiles and performance.
Consider the naive SpMM implementation operating on a CSR in figure \ref{fig:example-csr-triton} (d). When reading a value from sparse matrix \emph{A}, its respective column index must be read (line 5) to multiply with the correct row from \emph{B} which results in 3 loads of (1) \emph{col\_idx}, (2) $A$, and (3) $B$ to L1-caches and register files. However, the sparsity levels of sparse-MHSA layers are moderate, with up to 2048 megabytes of data produced per layer 
in SOTA sparse architectures \cite{mistral, gemma-two}, resulting in non-zero data (from $A$ and $B$) and metadata (from \emph{col\_idx}) contending for space in caches and register files. Moreover, this contention occurs within every iteration of the inner loop (lines 4-6) resulting in frequent cache evictions and extraneous data moved from L2 to L1. This is not mitigated even in heavily optimized vendor-libraries for sparse computations like \cusparse{}, which operate on CSRs. As seen in figure \ref{fig:motivational-study}, \cusparse{}'s SpMM transfers 4.73x more data from L2 to L1 compared to \cublas{}'s dense matrix-multiplication, despite executing 1/3rd of the compute instructions.

\textbf{Custom sparse formats lack generality.} CSFs are designed to store specific sparsity patterns, reducing the metadata storage required to represent the coordinates of non-zero values. However, using these formats to store sparsity patterns that these formats were not designed for results in redundant compute and storage. Consider the naive SpMM implementation operating on the specialized format that stores block-like sparsity patterns in figure \ref{fig:example-csr-triton} (c). When adopting the same format to represent the window pattern, boundary conditions result in redundant storage of 0s in tiles \emph{(0,1)} and \emph{(1,0)}, incurring redundant compute in line 11 within the inner loop of lines 6-14. Moreover, more data is read than is necessary, resulting in extraneous traffic through the memory hierarchy. This is not mitigated even in highly optimized hand-written kernels. As seen in figure \ref{fig:motivational-study}, the highly optimized block-sparse kernels \cite{block-sparse}, hand-written kernels written in triton that operate on a similar CSF, execute 1.4x the floating point operations compared to \cusparse{}.

\textbf{SPLAT.} In this work, we recognize the issues with GSFs and CSFs in the moderately sparse regime of sparse-MHSA and aim to bridge this gap by introducing a new sparse-format: affine-compressed sparse-row (\ds{}). Core to its design is our realization that the geometry of sparse-MHSA patterns is regular, and this regularity only needs metadata in $O(rows)$ as opposed to $O(nnzs)$ like in GSFs, without compromising generality like in CSFs. In doing so, our code-generation mechanism, SPLAT, produces high-performance implementations of sparse-MHSA by reducing the number of compute instructions and data traffic across the memory hierarchy as we observe in figure \ref{fig:motivational-study}.

\section{Overview}
\label{Overview}
\textbf{SPLAT's Workflow} Figure \ref{fig:system-overview} shows the workflow of SPLAT. SPLAT takes an input mask and code-generates high-performance, sparse-MHSA implementations just-in-time. Its code-generation strategy proceeds in three phases. 

First, it proceeds with two analysis passes. The first pass analyzes the input mask and ensures that pre-conditions are met for the correctness of later code-generation passes. The second pass generates information required for certain optimisations later code-generation passes can exploit. Second, it proceeds with 3 kernel code-generation passes, producing the \sddmm{} (section \ref{section:sddmm}), Softmax, and \spmm{} (see section \ref{spmm}) kernels used to implement sparse-MHSA. Third, it proceeds with an end-to-end code-generation pass (see section \ref{splat}) that allocates the necessary memory and creates auxiliary objects required for the correctness of kernel optimizations. The output of the end-to-end code-generation phase is a compiled function that can be used in transformer models to implement the sparse-MHSA mechanism. Our code-generation scheme produces high-performance sparse-MHSA implementations that store sparsity in our novel custom format: affine-compressed-sparse-row - \ds{} (see section \ref{acsr}) that leverages the regularity of these patterns. 

\subsection{Affine-Compressibility and Regularity}

As observed in section \ref{Motivation} appropriate sparse formats for sparse-MHSA kernels should \textit{compress} metadata, reducing the number of bytes used to store the indices of each non-zero value. Moreover, such a compression scheme should be able to precisely represent non-zero values' metadata across a variety of sparse-MHSA patterns without redundancy. We achieve both by observing that the \textit{point-set} of commonly occurring sparse-MHSA structures (like in figure \ref{fig:regular-sparsity}), consists of rows that are \textit{affine-compressible}, and are therefore \textit{regular}. This observation enables us to create a novel sparse-format that \textit{symbolically} stores the metadata for each row of a regularly sparse structure through an affine function. 

\textbf{Point-sets} To analyze the geometric properties of sparse-MHSA structures, we interpose their input-masks onto the cartesian coordinate system. For a mask, $M$, consisting of 0s and 1s, we map the point $M[i][j]$ to the point $(j,i)$. We define the point-set of an input-mask as the set of all points that are 1, i.e. the set of all $(j,i)$ such that $M[j][i] = 1$.

\textbf{Affine-Compressibility} Affine-compressibility is a property of sets of points on the cartesian coordinate system. It states that a set of points can be compressed, such that they consecutively neighbor each other along the x-dimension. \begin{definition}
    Consider a set of points: $P = \{(x_1, y_1), (x_2, y_2), ..., (x_k, y_k)\}$ on the coordinate system. $P$ is affine-compressible if and only if: $$\exists a, b \in \mathbb{R}, \text{such that,} \forall i \in [k-1] \text{, } x_i \cdot a + b + 1 = x_{i+1} \cdot a +b $$ We denote $a, b$ as the affine-indices of $P$.
\end{definition}
For example, the set $P_1=\{(0, 0), (2, 0), (4, 0), (6, 0)\}$ is affine-compressible with affine-indices: $a=0.5, b=0$, however the set $P_2=\{(0,0), (2, 0), (4, 0), (5, 0)\}$ is not.

\textbf{Regularity} Regularity is a property of a sparse-MHSA mask, building upon the concept of affine-compressiblity. We define a sparse-MHSA mask, $M$, to be \textit{regular} iff every row in its corresponding point-set, $P$, is affine-compressible.
\noindent Hence, a regularly sparse mask is amenable to metadata compression by symbolically storing the dense indices of the trailing-dimension of a sparse matrix. For example, in figure \ref{fig:memory-waterfall} (b), we see for each row of the window pattern the respective linear-transformation (denoted as $a$) and translation (denoted as $b$). 

\begin{figure}[t]
    \centering
    \includegraphics[width=\linewidth]{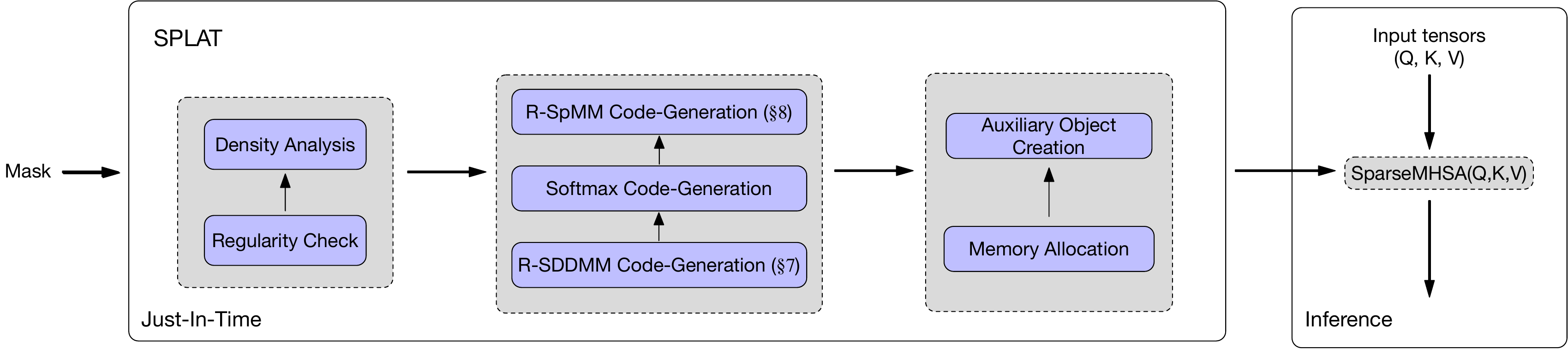}
    \caption{An overview on SPLAT's inner mechanics and how its just-in-time strategy produces compiled sparse-MHSA kernels for inference.}
    \label{fig:system-overview}
\end{figure}

\section{Affine-Compressed Sparse-Row}
\label{acsr}
We introduce the \ds{} format to store regularly sparse matrices. The \ds{} format leverages the regularity of sparse-MHSA matrices to store metadata in the order of number of rows with its metadata, the affine-indices, exposing various optimization opportunities. We detail the construction of the ACSR in section \ref{construction}, and the optimizations its metadata exposes in section \ref{properties}.

\subsection{\ds{} construction}
\label{construction}

The \ds{} comprises of two arrays: non-zero values, and metadata. The metadata symbolically records the index of the trailing dimension for each non-zero value in a particular row. \ds{} represents a sparse matrix by computing the affine-indices per row and compressing data across the trailing dimension such that non-zero values consecutively neighbor each other. For example, in figure \ref{fig:memory-waterfall}, the original 2-D matrix in figure \ref{fig:memory-waterfall} (a) is compressed across the trailing-dimension to \ref{fig:memory-waterfall} (b). Each row in \ref{fig:memory-waterfall} (b) has the triplet: $a$ (linear-transformation), $b$ (translation), and $nnzs$ (number of non-zero-values) as metadata. If $sparse_i$ is the index of a non-zero value's trailing dimension in the \ds{}, then $(sparse_i - b)/a$ is the index of the trailing dimension in the original sparse matrix. Consider the location with value 14 in figure \ref{fig:memory-waterfall} (b); it is at $sparse_i = 1$ and has $a=1$, $b=-1$, and $nnzs=4$ as metadata. The index of its trailing dimension in figure \ref{fig:memory-waterfall} (a) is thus $\frac{(1 + 1)}{1} = 2$. The metadata consists of the ($a$, $b$, $nnzs$) triplet per row, occupying $O(rows)$ rather than $O(nnzs)$ space. 

However, to construct an ACSR, the affine-indices for each row of a sparse 2-D matrix need to be computed. This is error-prone to implement and can be avoided by generating the affine-indices by solving a set of linear equations. Consider the $y^{th}$ row in a regularly-sparse 2-D matrix whose first two points are at column indices: $i_{0,y}$ \& $i_{1,y}$. Then, solving: $$\mathtt{Solve}\bigg(\begin{bmatrix} i_{0,y} & 1\\ i_{1,y} & 1 \end{bmatrix} \begin{bmatrix} a\\ b \end{bmatrix} = \begin{bmatrix} 0 \\ 1\end{bmatrix}\bigg)$$ 
Uncovers the affine-indices for that particular row. After computing the affine-indices for each row, we can check to see if the entire pattern is then affine-compressible. We do this by computing: $i_{x,y}\times a_{y} + b_{y} = i_{x-1,y}\times a_{y} + b_{y} + 1$, where $a_{y}, b_{y}$ are the affine-indices for the $y^{th}$ row and $i_{x,y}$ is the column index of the $x^{th}$ non-zero value in row $y$. This ensures that we do not incorrectly operate on masks that are not regular. 


\begin{wrapfigure}{R}{0.5\textwidth}
    \includegraphics[width=0.48\textwidth]{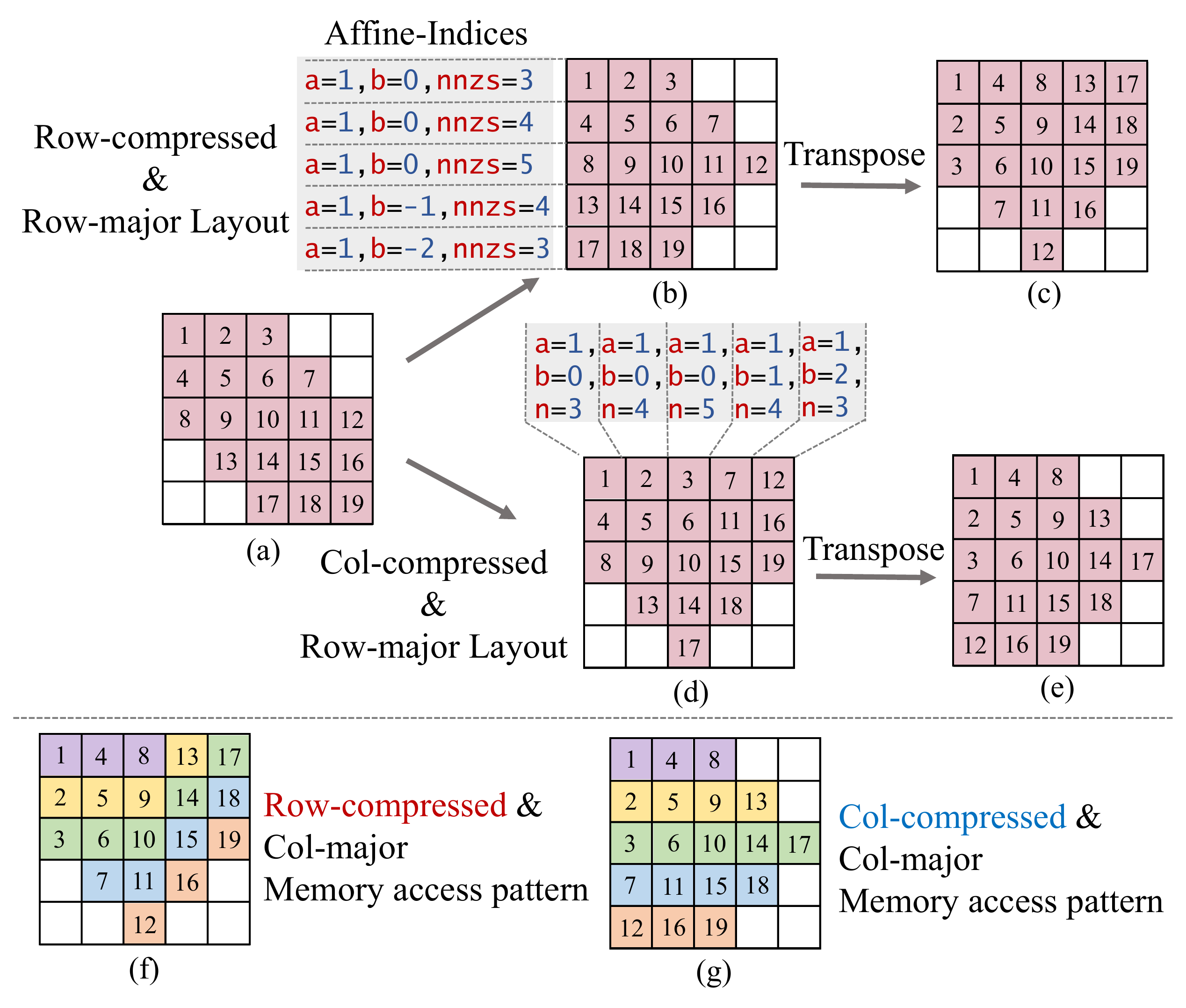}
    \caption{The 4 different data-layouts an \ds{} can take: (b) row-wise compressed row-major, (c) row-wise compressed col-major, (d) col-wise compressed row-major, (e) col-wise compressed col-major. For (f) and (g), colors represent elements of the same column. The \emph{a}, \emph{b}, and \emph{nnzs} represent a row's linear-transformation, translation, and number of non-zero values respectively, constituting the metadata. The \emph{a} and \emph{b} variables are the affine-indices of a row.}
    \label{fig:memory-waterfall}
\end{wrapfigure}

\subsection{\ds{} Properties}
\label{properties}

The \ds{} exposes novel optimization opportunities for \sddmm{} and \spmm{} kernels at the moderate sparsity levels observed in sparse-MHSA.

\textbf{Reduction in Predicated Execution.} Operating on sparse inputs may result in an imbalance of work across threads within a warp as certain input regions are potentially more dense than others. The ACSR, in storing the dense indices of the trailing dimension symbolically via affine-indices, exposes which regions of a sparse tensor are identical at the granularity of a row. For example, if different rows have identical linear-transformations, translations, and number of non-zero values, then they have data placed in identical trailing indices. Rows with identical affine-indices can be re-mapped to operate on threads within a warp to reduce predicated execution. 

\textbf{Favorable read/write access patterns.} In \spmm{} kernels, memory accesses to conventional sparse-formats can be un-coalesced. To coalesce these accesses, contiguous elements in a column need to be laid out in contiguous memory addresses, which our construction in section \ref{construction} does not do. Fortunately, regularly sparse kernels are symmetric and are therefore also \textit{column-wise} affine-compressible. When compressing data across the trailing dimension and laying out data in column-major, contiguous memory addresses in the \ds{} contain contiguous elements in a column of a sparse matrix as shown in figure \ref{fig:memory-waterfall} (g). 

\textbf{Fast indexing.} Certain sparse kernels check whether an index in a sparse 2-D matrix is non-zero by traversing a region of values in a sparse-format. For example, to identify if a point $(dense_i, dense_j)$ (leading, trailing dimensions resp.) is non-zero in a CSR requires a traversal of all the points $[rowPtr[dense_i], rowPtr[dense_i + 1]]$. However, the ACSR can compute the answer in $O(1)$ time and metadata accesses by computing: $dense_j\%(1/a)==0 \wedge dense_j +b > 0$, where $a$ and $b$ are the affine-indices of row $dense_i$. We exploit this in our \spmm{} and \sddmm{} kernels.

\section{Regularly Sparse Primitives}
\label{Regular Sparsity}
Sparse-MHSA is implemented by 3 kernels: \sddmm{}, Softmax, and \spmm{}. First, the \sddmm{} kernel computes $M\otimes(QW^Q)(KW^Q)^T$. Both the $QW^Q$ and $KW^K$ kernels are dense, producing an output \ds{}. Second, this output is ingested by a softmax kernel, computing the softmax for each input row, producing an \ds{}. Last, the \spmm{} consumes the output of the sofmax kernel: $A^s_i$, computing the $A_i^sV_w$ product, where $A^s_i$ is stored as an \ds{}. 

However, with the introduction of the \ds{} format, novel optimizations and thread-access patterns to reference data need to be explored due to its unique metadata structure. These optimizations and access patterns should ensure that threads re-use data in L1 caches and exhibit spatial locality, issue coalesced memory writes and reads, and have identical control flow. Producing access patterns that achieve all 3 traits in the context of sparse-MHSA is challenging due to the geometric diversity of input masks. We propose a unified set of optimizations and a code-generation mechanism that produces high-performance code for \spmm{} and \sddmm{} kernels that can carefully balance these 3 traits. We additionally propose an end-to-end code-generation scheme that leverages these kernels to implement sparse-MHSA. Section \ref{section:sddmm} and \ref{spmm} details the individual code-generation and optimization opportunities for \sddmm{} and \spmm{} kernels, and section \ref{splat} details the end-to-end code generation scheme for sparse-MHSA.

\section{High-Performance \sddmm{}}
\label{section:sddmm}
An important optimization frequently applied to GPU implementations of SDDMM kernels is tiling, which improves reuse and reduces thread-divergence. This involves deciding a mapping of thread-blocks to outputs. Different tiling strategies have been explored within the context of random and extreme sparsity. \cite{aspt} re-orders the input to enhance reuse of tiles, \cite{sparse-gpu-kernels-dl} proposes a 1-dimensional tiling scheme tied to the CSR, and \cite{sddmm-inspector-executor} uses a model-driven inspector-executor approach to increase reuse.  

Such strategies are either targeted towards extreme sparsity levels where the cost of inspection and re-ordering is low, or are tied to particular sparse formats. Therefore, they are expensive to apply at the moderate sparsity levels in sparse-MHSA with the unique metadata layout of the \ds{}. Comparatively, we leverage the regular nature of the sparsity patterns and the ACSR format to provide a novel, inexpensive tiling strategy for the \sddmm{} kernel (see section \ref{section:greedy-tiling}). We show that our tiling approach increases cache reuse, and memory coalescing whilst reducing thread-divergence and redundant compute with \textit{strong optimality guarantees} (see section~\ref{section:greedy-tiling}). 

\subsection{Observations} 
\label{sddmm:observations} 

The geometric diversity of sparse-MHSA patterns gives rise to many possible arrangements of thread-blocks over the output, $C$. Each arrangement trades off different factors that impact performance. We categorize each sparse pattern as either polygonal or strided. Polygonal patterns comprise of non-zero values that are clustered together, with no gaps between them like the windowed and blocked pattern. Strided patterns consist of non-zero values which have constant gaps between them, each non-zero value having no neighbor. 

\label{section:case-study-sddmm-tiling}
\begin{figure*}[t]
    \centering
    \includegraphics[width=\textwidth]{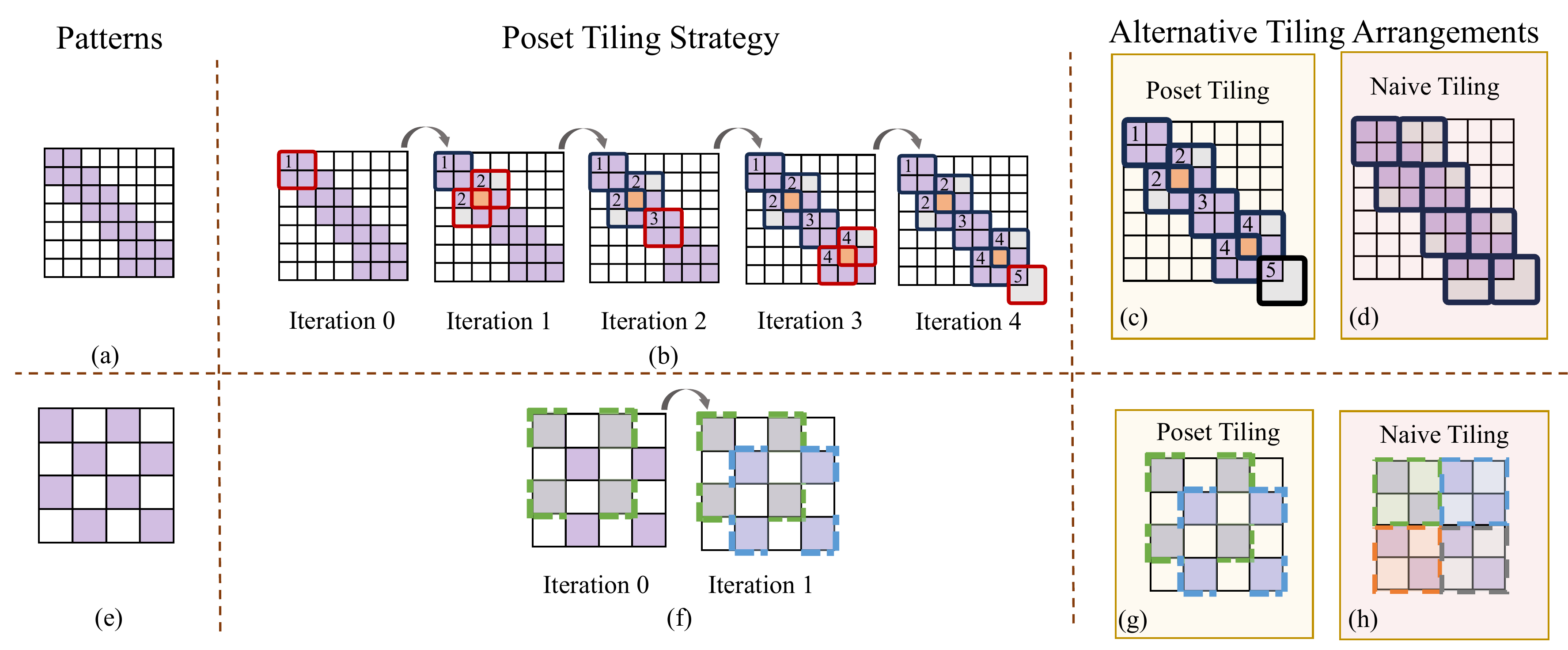}
    \caption{Different ways thread-blocks can tile strided and polygonal patterns. (a) and (e) are the two patterns. (b) and (f) demonstrate our novel poset tiling strategy. (c) and (d) show two strategies to tile polygonal patterns. Gray represents thread-divergence, and orange represents redundant compute. (g) and (h) show two strategies to tile strided patterns. Numbers on the mask represent what iteration in the for-loop (line 4 of algorithm \ref{alg:greedy-tile}) the thread-block was placed.}
    \label{fig:strided-different-tile}
    \vspace{-15pt}
\end{figure*}

\textbf{Polygonal Patterns.} Figures \ref{fig:strided-different-tile} (c) and \ref{fig:strided-different-tile} (d) show two valid tiling arrangements for the same polygonal pattern. \ref{fig:strided-different-tile} (d) incurs more threads with divergent control-flow compared to \ref{fig:strided-different-tile} (c), as more threads within a thread-block exceed the boundary of the pattern and are predicated to terminate, diverging from the threads that compute output values. Instead \ref{fig:strided-different-tile} (c) incurs threads with redundant compute as thread-blocks that overlap (orange points) compute the same values. The more performant tiling arrangement between the two will depend on the relative costs associated with thread-divergence, redundant compute, and number of thread-blocks.

\textbf{Strided Patterns.} Figures \ref{fig:strided-different-tile} (g) and \ref{fig:strided-different-tile} (h) show two valid tiling arrangements for the same strided pattern. \ref{fig:strided-different-tile} (h) exhibits low spatial locality compared to \ref{fig:strided-different-tile} (g), as thread-blocks operate on outputs that do not re-use rows and columns from the input. Instead, thread-blocks in \ref{fig:strided-different-tile} (g) issue un-coalesced reads to input matrices as they operate on outputs with a constant stride. Additionally, \ref{fig:strided-different-tile} (h) exhibits increased divergent control-flow and uses more thread-blocks compared to  \ref{fig:strided-different-tile} (g). The more performant tiling arrangement between the two will depend on the relative costs associated with un-coalesced memory accesses, divergent control flow, spatial locality, and number of thread-blocks. 

Our observations indicate that 4 factors impact the performance of a tiling arrangement. (1) The amount of redundant compute between thread-blocks that overlap and compute the same output. (2) The amount of thread-divergence within a thread-block by being placed on irregular boundary conditions. (3) The amount of reuse within a thread-block by computing outputs that share either rows/columns of input matrices $A$ and $B$. (4) The number of memory access/write requests issued by a warp that are coalesced by reading/writing to contiguous memory locations. A good code-generation scheme should generate a tiling arrangement that reduces the cost of each factor. 

\subsection{\sddmm{} Performance Characterisation} 
\label{sddmm:characterisation}

We first develop a cost model that explicitly reasons about each of the four factors that affect the performance of a tiling strategy. We achieve this by developing expressions to compute each of these factors as a function of the thread-block arrangement of a given tiling strategy. 



\textbf{Thread-Blocks.} We define a thread-block to be a mapping between a logical rectangle of threads of size $m \times n$ to points on a mask, $M$.

\begin{definition}
A thread-block $TB_k$ consisting of $m \times n$ threads that partially covers a point-set $P$ is defined by a tuple: $(t, s), t\in \mathbb{N}\times\mathbb{N}, s \in \mathbb{N}$. We further define the compute of a thread-block as $Comp(TB_k) = \{t + (i*s, j*s) | i\in \{0,\dots, m-1\}, j \in \{0, \dots, n-1\}\}$.

Finally, we define its cover, anchor-point, and stretch factor as:
\begin{align*}
    Cov(TB_k) = Comp(TB_k) \cap P, \text{ } Anc(TB_k) = t, \text{ } Str(TB_k) = s
\end{align*}
\end{definition}

The cover and anchor-point represent the points a thread-block computes and its top-left corner (its translation from the origin) respectively. For example, in figure~\ref{fig:strided-different-tile} (g), the cover of the two thread-blocks is $Cov(TB_0) = \{(0,0), (0, 2), (2, 0), (2, 2)\}$ (green thread-block), with $Anc(TB_0) = (0, 0)$, and $Cov(TB_1) = \{(1, 1), (3, 1), (1, 3), (3, 3)\}$ (blue thread-block), with $Anc(TB_1) = (1, 1)$. The stretch factor of a thread-block determines how far apart threads in neighboring rows and columns will be placed when covering a point-set. For example, the two thread-blocks in figure \ref{fig:strided-different-tile} (g) have a stretch factor of 2. 

\subsubsection{Factors affecting performance} Definition \ref{thread-block-definition} gives the mathematical formulation of the four factors impacting the performance of a \sddmm{} kernel. We give intuitions for those definitions next. Note that thread-divergence and redundant compute are aggregate sums, while reuse and memory coalesced requests are averages across all thread-blocks.

\textbf{Thread-divergence.} Within a thread-block, threads that exceed the boundary conditions of a mask deviate control flow from threads that do not. Although thread-divergence happens within a warp, due to the irregular boundary conditions in regularly sparse masks, oftentimes threads that exceed the boundary of a mask exhibit thread-divergence. Therefore, we define the collective thread-divergence of an arrangement as the number of threads that do not cover a point in the point-set. (See $\phi_{TD}$ in definition \ref{thread-block-definition}).

\textbf{Redundant Compute} An arrangement's redundant compute is the number of excess threads that do not do useful work across all thread-blocks. This amounts to a sum of all the threads in the arrangement subtracted by both the number of points in the point-set and the number of threads that have divergent control flow. (See $\phi_{R}$ in definition \ref{thread-block-definition}).

\textbf{Reuse of Thread-block.} Threads within a thread-block that do useful work usually reuse values of the input rows or columns. The threads that do not do useful work fall into two categories: (1) Divergent threads, (2) redundant threads. To compute the reuse of a thread-block, we compute the fraction of threads within a thread-block that are both not divergent and redundant. Hence, we define the reuse of an arrangement to be the average reuse across all thread-blocks. (See $\phi_{RU}$ in \ref{thread-block-definition}).

\textbf{Degree memory requests are coalesced.} The degree to which memory requests of a warp within a thread-block are coalesced is \textit{inversely proportional} to a thread-block's stretch factor. Since both the inputs are dense, the larger the stretch factor, the larger the stride in reads issued to inputs, and writes issued to outputs. Hence, we define the amount of memory coalescing as the average stretch factor across all the thread-blocks in an arrangement. (See $\phi_{CMR}$ in \ref{thread-block-definition}).

\begin{definition}\label{thread-block-definition}
Consider an arrangement of thread-blocks, $TB=\{TB_1, TB_2, ... TB_\lambda\}$ each containing $m \times n$ threads, covering a point-set, $P$ such that $\bigcup_{TB_i \in TB}Cov(TB_i) = P$. We define its collective thread-divergence ($\phi_{TD}$), redundant compute ($\phi_R$), reuse ($\phi_{RU}$), and coalesced memory-requests ($\phi_{CMR}$):
\begin{align*}
&\phi_{TD} = \bigg| \bigg( \bigcup_{TB_i \in TB} Comp(TB_i) \bigg) \backslash P \bigg| & \phi_R = \lambda m n - |P| - \phi_{TD} & \\
&\phi_{RU} = \frac{mn - \frac{\phi_{TD}}{\lambda} - \frac{\phi_{R}}{\lambda}}{mn} = \frac{|P|}{\lambda m n} & \phi_{CMR} = \frac{1}{\lambda} \sum_{TB_i \in TB} \frac{1}{Str(TB_i)} &
\end{align*}
\end{definition}

\noindent We illustrate divergent threads and redundant threads in figure \ref{fig:strided-different-tile} (c) as gray and orange respectively, with $\phi_{TD} = |\{(1,3), (3,1), (6,4), (6,7), (7,7), (7,6)\}| = 6$, and $\phi_{R} = 7 * 4 - 20 - 6 = 2$. We compute the reuse of \ref{fig:strided-different-tile} (g): $\frac{2\times2-0-0}{2\times2} = 1$ and \ref{fig:strided-different-tile} (h): $\frac{2\times2-8/4 - 0}{2\times2} = \frac{1}{2}$, as well as the coalesced memory requests of \ref{fig:strided-different-tile} (g): $\frac{1}{2}(\frac{1}{2} + \frac{1}{2}) = \frac{1}{2}$, and \ref{fig:strided-different-tile} (h): 1.

\textbf{Cost Model.} A performant tiling arrangement will minimize redundant compute, thread-divergence, and stretch-factors of thread-blocks whilst maximizing reuse and coalesced memory requests. This amounts to minimizing $\phi_{TD}$, and $\phi_R$, whilst maximizing $\phi_{RU}$ and $\phi_{CMR}$.

On one hand, minimizing both $\phi_{TD}$ and $\phi_{R}$ and maximising $\phi_{RU}$ corresponds to reducing $\lambda$, since the dimensions of thread-block: $m$, $n$ and the size-of the point-set: $P$, are all fixed. Therefore, optimal tiling arrangements that reduce the costs of thread-divergence, redundant-compute, and low reuse will minimize the number of thread-blocks used. On the other hand, maximising $\phi_{CMR}$ corresponds to reducing $Str(TB_i)$ for all thread-blocks in the arrangement. Therefore, a good cost function will increase with the number of thread-blocks used, and decrease when $\phi_{CMR}$ increases. 

\begin{definition}
\label{cost-function}
Consider an arrangement of thread-blocks, $TB=\{TB_1, TB_2, ... TB_\lambda\}$ each containing $m \times n$ threads covering a point-set, $P$. Its cost is denoted as $Cost(TB)$ and is computed as follows: $Cost(TB) = \frac{\lambda}{\phi_{CMR}}$
\end{definition}

\subsection{Poset Tiling} 
\label{section:greedy-tiling}
We develop a tiling strategy - poset tiling - to tile patterns with \textit{optimality} guarantees according to our cost model. Given a mask, $M$, whose point-set is $P$, it outputs an arrangement of thread-blocks that covers $P$ by computing the anchor-points where thread-blocks should be placed. It computes these anchor-points by successively computing a set $\top$, using the comes-before (CB) relation. 


\begin{definition}
Suppose we have a point-set, $P$ that is partially tiled by $TB=\{TB_1, TB_2, ... TB_k\}$. Then given two points, $(x_1, y_1), (x_2, y_2) \in P$, we say that $x \leftarrow y$ (i.e. x CB y) iff $x_1 \leq x_2 \wedge y_1 \leq y_2$. \\ Moreover, let $P'$ be the set of un-covered points of $P$. We define $\top$ to be the set of points in $P'$ such that: $\forall p_i \in \top, \nexists p_j \in P'$ such that $p_j \leftarrow p_i$.
\end{definition}

\noindent Intuitively, the set $\top$ defined over a partially covered point-set, $P$, represents a set of uncovered points that, when used as the anchor-points of thread-blocks, cover a large uncovered portion of $P$. For example, in figure \ref{fig:strided-different-tile} (b) (Iteration 2) when the points in $\top = \{(4,5), (5, 4)\}$ are treated as the anchor-points of thread-blocks, we produce the arrangement in \ref{fig:strided-different-tile} (b) (Iteration 3). These additionally placed thread-blocks cover 6 uncovered points.

Algorithm \ref{alg:greedy-tile} demonstrates how poset tiling covers a point-set, $P$, using the set $\top$. It takes as input a point-set, $P$, to cover, and the dimensions of the thread-blocks ($m \times n$) used to cover $P$. It outputs a list of points representing the anchor-points of an arrangement of thread-blocks that covers $P$. It begins by initializing the answer ($AncPt$) to $\emptyset$ in line 1, and the set of uncovered points ($Rem$) to $P$ in line 2. It decides a suitable stretch factor to apply to thread-blocks (line 3) by calling a sub-routine $stretchFactorSelection$ (described in \ref{sddmm:stretch-factor-select}). The main loop in line 3 iterates until $Rem$ is $\emptyset$, which occurs when we have a cover of $P$. At each loop iteration, line 4 computes the current $\top$ ($\top_{curr}$) over the uncovered points $Rem$, adding this to the solution set $AncPt$. Finally, using $\top_{curr}$ as the anchor-points of thread-blocks stretched by a factor of $s$, lines 6-8 remove points from $Rem$ that will be covered. 

Figure \ref{fig:strided-different-tile} (b) Iterations 0-4 represent each iteration of the main loop in poset tiling. 

\begin{theorem}
Given point-set $P$ and arrangement of thread-blocks $TB_{poset} = \{TB_1, TB_2, ..., TB_{\lambda_{poset}}\}$, each of size $m \times n$, generated by algorithm \ref{alg:greedy-tile} to cover $P$. Let $TB_{opt} = \{TB_1, TB_2, ..., TB_{\lambda_{opt}}\}$ be the arrangement of lowest possible cost to cover $P$. Then, for the windowed and blocked pattern: \label{theorem:bound} \begin{equation}
    \frac{Cost(TB_{poset})}{Cost(TB_{opt})} \leq 1 + \frac{m}{l}
\end{equation}

Where $l$ is the maximum number of points in a row of the mask. Moreover, for the strided pattern, the cost of the arrangement is optimal. See Appendix C for the proof.  
\end{theorem}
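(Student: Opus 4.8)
The plan is to turn the claimed cost ratio into a purely combinatorial statement about thread-block counts, and then treat the polygonal patterns (windowed, blocked) and the strided pattern separately. First I would rewrite the cost: since $\phi_{CMR}=\frac{1}{\lambda}\sum_i Str(TB_i)^{-1}$, Definition~\ref{cost-function} gives $Cost(TB)=\lambda^2/\sum_i Str(TB_i)^{-1}$. Because $Str(TB_i)\ge 1$ we have $\sum_i Str(TB_i)^{-1}\le\lambda$, so $Cost(TB)\ge\lambda$; and because $Cov(TB_i)\subseteq Comp(TB_i)$ has at most $mn$ points while $\bigcup_i Cov(TB_i)=P$, we get $\lambda\ge|P|/(mn)$. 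Hence $Cost(TB_{opt})\ge|P|/(mn)$ for every arrangement. On the other side, Algorithm~\ref{alg:greedy-tile} applies a single stretch factor $s$ (returned by $stretchFactorSelection$) to all of its blocks, so $Cost(TB_{poset})=\lambda_{poset}\,s$. Everything then reduces to comparing $\lambda_{poset}\,s$ against the lower bound $|P|/(mn)$.

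For the windowed and blocked patterns the non-zeros in each row are contiguous, so $stretchFactorSelection$ returns $s=1$ (any $s>1$ only enlarges $\lambda$ and hence $Cost$), and the ratio collapses to $Cost(TB_{poset})/Cost(TB_{opt})=\lambda_{poset}/Cost(TB_{opt})\le mn\,\lambda_{poset}/|P|$. It then suffices to prove $\lambda_{poset}\le(1+\tfrac{m}{l})\,|P|/(mn)$, which I would obtain from a direct geometric analysis of poset tiling. Because these masks are band-shaped, the frontier $\top$ recomputed in line~4 advances in horizontal strips of $m$ consecutive mask-rows, so $TB_{poset}$ decomposes into $\lceil R/m\rceil$ strips, where $R$ is the number of non-empty rows. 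Within one strip the occupied columns form an interval of length at most $l+m-1$ — for the windowed band each of the two boundaries shifts by at most one column per row, and the blocked pattern admits the same kind of decomposition (interior strips of width $l$ together with $O(R/l)$ block-boundary strips of width at most $2l$) — so a strip is covered by at most $\lceil(l+m-1)/n\rceil$ blocks. Summing over strips and using $|P|=R\,l$ (the band has essentially constant width $l$; corner truncations are lower order) gives $\lambda_{poset}\le\frac{Rl}{mn}+\frac{R}{n}+(\text{l.o.t.})=\frac{|P|}{mn}\big(1+\tfrac{m}{l}\big)+(\text{l.o.t.})$, which is the bound (exact when the relevant divisibilities hold).

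For the strided pattern let $\sigma$ denote the stride; here $stretchFactorSelection$ returns $s=\sigma$, and a $\sigma$-stretched $m\times n$ grid aligned to the pattern covers exactly $mn$ of its points with no waste, so poset tiling produces a perfect partition: $\phi_{TD}=\phi_R=0$, $\lambda_{poset}=|P|/(mn)$, and $Cost(TB_{poset})=\sigma|P|/(mn)$. To show this is optimal I would prove the matching lower bound $Cost(TB)\ge\sigma|P|/(mn)$ for every arrangement. The key lemma is a $\gcd$ counting bound: if $TB_i$ has stretch $s_i$ then, writing $g_i=\gcd(s_i,\sigma)$, each of its $m$ grid-rows meets the pattern (an arithmetic progression with common difference $\sigma$) in at most $\lceil n g_i/\sigma\rceil$ points, so $|Cov(TB_i)|\le\tfrac{mn}{\sigma}\min(s_i,\sigma)\le\tfrac{mn}{\sigma}s_i$. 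With a common stretch $s$ this yields $\lambda\ge\sigma|P|/(mn\gcd(s,\sigma))$, hence $Cost(TB)=\lambda s\ge\tfrac{\sigma|P|}{mn}\cdot\tfrac{s}{\gcd(s,\sigma)}\ge\tfrac{\sigma|P|}{mn}$, with equality exactly at $s=\sigma$; the non-uniform-stretch case follows from a short case analysis on the multiset $\{s_i\}$ together with $\sum_i Str(TB_i)^{-1}\le\lambda$.

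The main obstacle is the polygonal case: making rigorous the claim that poset tiling decomposes a band-shaped mask into $m$-row strips of width at most $l+m-1$ requires reasoning about how the $\top$ frontier and the removal step in lines~6--8 interact with the staircase boundary of the pattern, and then controlling the ceiling-function and corner-truncation terms tightly enough to land at exactly $1+m/l$ rather than a larger constant. By comparison, the universal lower bound $Cost(TB_{opt})\ge|P|/(mn)$ and the whole strided analysis are essentially mechanical once the $|Cov(TB_i)|\le mn$ observation and the $\gcd$ lemma are in hand.
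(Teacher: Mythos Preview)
Your overall reduction for the polygonal case—$Cost(TB_{opt})\ge\lambda_{opt}\ge|P|/(mn)$ and $Cost(TB_{poset})=\lambda_{poset}$ since $s=1$—matches the paper, and your strip-width estimate $l+m-1$ is exactly what the paper computes for a patch of $m$ consecutive rows. The difference lies in how the obstacle you flag is handled. The paper never analyzes the poset frontier $\top$ directly; instead it introduces an auxiliary \emph{Naive Tiling} that explicitly partitions the mask into $m$-row patches and tiles each left-to-right, observes that poset tiling restricted to a single patch \emph{is} naive tiling so that $\lambda_{poset}\le\lambda_{naive}$, and then bounds $\lambda_{naive}$. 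On top of this the paper proves an inductive lower-bound lemma (any cover of the first $k$ patches uses at least $\lambda_{naive}^{(k)}-\sum_{i\le k}\lceil(w^{(i)}-l)/n\rceil$ blocks), which controls $\lambda_{naive}-\lambda_{opt}$ directly rather than comparing $\lambda_{naive}$ to the crude bound $|P|/(mn)$; this is what lets the final constant land at $m/l$ rather than the $(m+n)/l$ that your bare ratio would produce after keeping the ceilings.

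For the strided pattern your argument has a real gap. First, $stretchFactorSelection$ does not in general return $s=\sigma$: by the paper's description it returns $\argmin_{d\mid\sigma}\lambda^{d}d$, and for instance when $\sigma>m$ the choice $s=1$ can beat $s=\sigma$. Second, your coverage bound drops a ceiling: a grid-row of a block with stretch $s_i$ meets a row of the mask in at most $\lceil n\gcd(s_i,\sigma)/\sigma\rceil$ points, not $n\gcd(s_i,\sigma)/\sigma$, and once the ceiling is kept the conclusion $|Cov(TB_i)|\le mn\,s_i/\sigma$ and hence $Cost(TB)\ge\sigma|P|/(mn)$ fails (for $\sigma>mn$ every block still covers at least one point, so arrangements with $Cost\ll\sigma|P|/(mn)$ exist). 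The paper's proof takes a different, simpler route that needs no closed-form lower bound: for each fixed uniform stretch $s$, poset tiling on a strided mask has $\phi_R=0$ and minimal $\phi_{TD}$, hence $\lambda_{poset}^{s}=\lambda_{opt}^{s}$; then $stretchFactorSelection$ explicitly minimizes $\lambda^{s}s$ over $s$ (with the $\gcd$ lemma reducing the search to divisors of $\sigma$), so by construction $Cost(TB_{poset})$ equals the minimum over all uniform-stretch covers.
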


\SetKwComment{Comment}{/*}{*/}
\SetKwInOut{Input}{Inputs}
\SetKwInOut{Output}{Output}
\begin{wrapfigure}{R}{0.5\textwidth}
    \begin{minipage}{0.5\textwidth}
        \begin{algorithm}[H]
        \caption{Poset Tiling}\label{alg:greedy-tile}
        \Input{$P, m, n$} 
        $AncPt \gets \emptyset$\;
        $Rem \gets P$\; 
        $s \gets stretchFactorSelection(P)$\; 
        \While{$\cup_{TB_i \in TB}Cover(TB_i) \neq P$}{
            $\top_{curr} \gets f_\top(Rem)$\;
            $AncPt \gets AncPt \cup \top_{curr}$\;
            \For{$(x, y) \in \top_{curr}$}{
                $Rem \gets P - \{\bigcup_{i\in[m]}x+i \times s\} \times \{\bigcup_{j\in[n]}y+j \times s\}$\;
            }
        }
        \Output{AncPt}
        \end{algorithm}
    \end{minipage}
\end{wrapfigure}

\subsubsection{Stretch Factor Selection}\label{sddmm:stretch-factor-select}
The stretch factor of thread-blocks may influence the number of thread-blocks used in a tiling arrangement. For example, take the strided pattern in figure \ref{fig:strided-different-tile} (g) and \ref{fig:strided-different-tile} (h). By increasing the stride from 1 to 2, we go from the arrangement in \ref{fig:strided-different-tile} (h) which uses 4 thread-blocks, to the arrangement in \ref{fig:strided-different-tile} (a) which uses 2 thread-blocks. Therefore, a good stretch factor will balance the cost of issuing un-coalesced memory requests with the number of thread-blocks used in an arrangement. 

A naive stretch factor selection sub-routine will run poset tiling for every possible stretch factor, enumerating the cost of each arrangement and selecting the factor that corresponds to the lowest cost arrangement. Since the stretch-factor is bounded by the sequence length, $N$, this will return: $s = \argmin_{i \in [N]}\frac{\lambda^i}{\phi_{CMR}^i}$. Where $\lambda^i$ and $\phi_{CMR}^i$ are the number of thread-blocks and degree of coalesced memory-requests of an arrangement produced by poset tiling with stretch factor $i$, respectively. However, we make two observations that enable us to reduce the number of arrangements to search through. 

For polygonal patterns, stretching a thread-block will only reduce its cover (see Appendix A for the proof). Therefore, poset tiling will produce an arrangement of the lowest thread-block count when the stretch factor is 1. Hence, for polygonal patterns, we return 1.

However, for strided patterns, stretching a thread-block may increase its cover. To aid us in cutting down the space of stretch factors to search through, we make the following observation. Consider a strided pattern with mask, $M$. Define its stride to be the number of points between two successive non-zero values in a row of $M$, denoted as X. Then we have that the cost of an arrangement is minimized when applying algorithm \ref{alg:greedy-tile} with stretch factor $s = \argmin_{d_i \in factors(X)} \frac{\lambda^{d_i}}{\phi_{CMR}^{d_i}}$ (See Appendix B for the proof).

\newpage
\subsection{\sddmm{} Kernel Code-Generation}

\begin{wrapfigure}{R}{0.5\textwidth}
    \begin{minipage}{.47\textwidth}
        \input{code-listings/sddmm-kernel}
    \end{minipage}
\end{wrapfigure}

We show the code-generation pass and a naive \sddmm{} kernel in listing \ref{alg:sddmm-code-gen} and listing \ref{alg:sddmm-naive} respectively. The code-generation pass takes in an input-mask (line 1) and first applies poset tiling to generate the thread-block count, anchor-points, and stretch factor (line 5). It then uses the thread-block count to instantiate the \sddmm{} launcher and compiles this function (lines 7-8), returning the function pointer, func. Finally, it returns the function pointer and anchor-points (line 9).

\begin{wrapfigure}{R}{0.5\textwidth}
    \begin{minipage}{.47\textwidth}
    \input{code-listings/sddmm-code-gen}
    \end{minipage}
\end{wrapfigure}

We illustrate a naive implementation of the \sddmm{} kernel for brevity. The \sddmm{} kernel takes as input the anchor-points (idxToOut), stretch-factor (s), left (A), and right (B) input matrices, and space to store the output (C). Lines 7-8 use the thread-block id to index the map to recover the anchor-point of the thread-block the current thread belongs to. It uses this anchor-point to compute the row of A, $out_{iy}$, and column of B, $out_{ix}$, to dot-product in lines 10-12. Finally, it indexes the \ds{} metadata to get the affine-indices for the $out_{iy}$ row, applying the correct linear-transformation and translation to store the answer, out, at the correct index in the \ds{} non-zero values array, C (in lines 14-16). 

\section{High-Performance \spmm{}}
\label{spmm}
The R-SpMM kernel of sparse-MHSA consumes the softmax of the sparse output of the R-SDDMM kernel ($A_i^s$) and computes the $A_i^sV_w$ product, where $V_w$ is the dense matrix that results from the multiplication of $W_i^V$ with the value matrix $V$. $A_i^s$ is in the ACSR format and hence, similar to the \sddmm{} kernel, we need to devise novel techniques to generate high-performance \spmm{} code to leverage the full potential of the ACSR properties mentioned in Section~\ref{acsr}. We first show how to construct a naive R-SpMM kernel that uses the ACSR format, then incrementally present two optimizations that enable SPLAT to generate high-performance \spmm{} implementations.

\subsection{Observations}

The Algorithm in listing \ref{alg:spmm-naive} shows a naive implementation of the \spmm{} kernel $C=AB$, where $A$ is a sparse matrix represented in the \ds{} format and $B$ is a dense matrix. Traditionally, SpMM kernels iterate over the non-zero values of $A$ and multiply these with a corresponding value from $B$ (see figure \ref{fig:example-csr-triton} (d) for an example). However, in \spmm{} kernels, up to 50-70\% of $A$ can contain non-zero values. Rather than iterating over only the non-zero values, we treat the \spmm{} kernel as a dense computation and iterate over the size of the entire trailing dimension (leading dimension of B) of matrix $A$. To reduce redundant computation, we place a guard condition (see listing \ref{alg:spmm-naive} line 10) to skip iterations where values in $A$ are 0. We can leverage the \ds{} metadata to implement the guard condition in $O(1)$ through the observation that $A[dense_y][dense_x]$ exists iff: $$dense_x \% 
 (\frac{1}{AffineIndices[dense_y].a}) == 0 \wedge dense_x + AffineIndices[dense_y].b >= 0$$

Nevertheless, listing \ref{alg:spmm-naive} has two issues. (1) In SpMM kernels, non-zero values in the product of $AB$ are produced only when non-zero values from $A$ multiply with non-zero values from $B$. However, by letting the loop in line 4 iterate across the entire trailing dimension of $A$ (leading dimension of $B$), we end up with identical loop counts regardless of the degree of sparsity in $A$. (2) The predicate in line 10 will result in control divergence between threads in a warp when corresponding attempts to read values from $A$ are non-zero for certain threads, but zero for others. Moreover, this divergence is exacerbated by being placed within a loop that may run for 1000s of iterations. We propose optimizations to mitigate each issue in section \ref{spmm:optimizations}. 

\label{section:spmm}
\begin{figure*}[t]
    \centering
    \includegraphics[width=\textwidth]{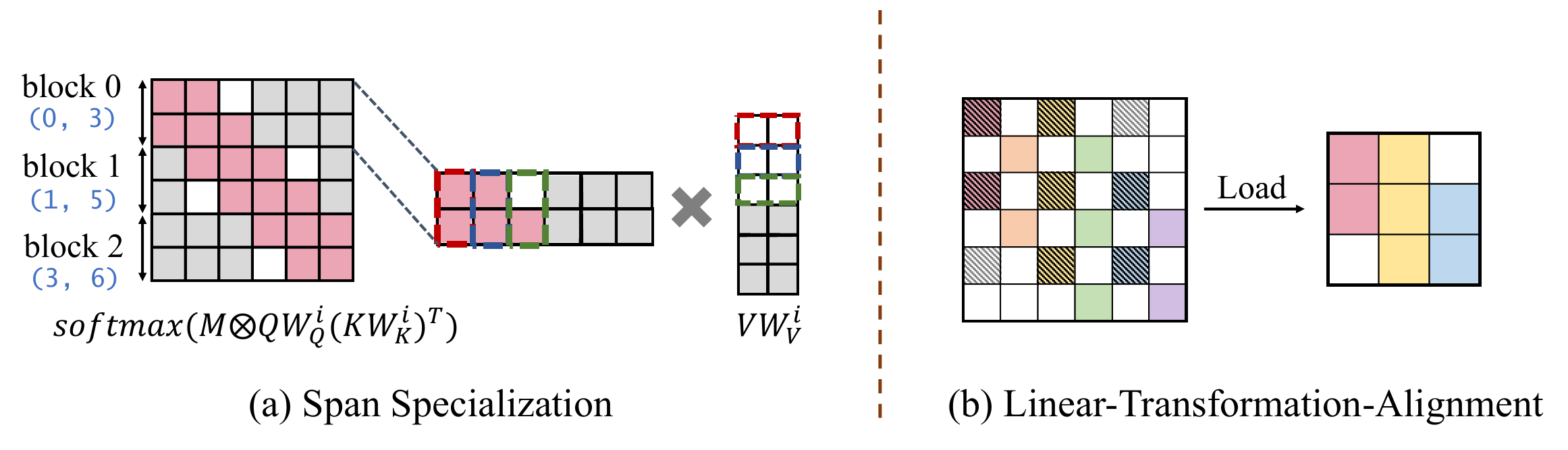}
    \caption{Different SpMM optimizations. (a) span-specialisation, white are redundant reads. (b) - linear-transformation-alignment, colors represent points whose affine-indexes are identical (aligned), hashed boxes correspond to values loaded into the same thread-block.}
    \label{fig:spmm}    
\end{figure*}

\subsection{Optimizations}
\label{spmm:optimizations}

\begin{wrapfigure}{R}{0.5\textwidth}
    \begin{minipage}{0.47\textwidth}
        \input{code-listings/spmm-kernel} 
    \end{minipage}
\end{wrapfigure}

\textbf{Span specialisation.} To mitigate issue 1, we observe that sparse-MHSA structures tend to have chunks of non-zero values. For example, the windowed pattern contains all its non-zero values in chunks surrounding the matrix's main diagonal. Therefore, a thread-block reading a selection of rows from a sparse matrix $A$ in the window pattern need not iterate across all dense indices in the trailing dimension and can start at the first non-zero value and end at the last non-zero value: its \textit{column-span}. Suppose a thread-block is reading a collection of rows, $r_1, r_2, ... r_k$ from sparse matrix $A$. Its column-span can be computed in $O(k)$ time through: $$span(r_1, r_2, ... r_k) = [\min_{r_i\in[r_1, r_k]}(-AI[r_i].b), \max_{r_i \in [r_1, r_k]}(AI[r_i].nnzs - AI[r_i].b)]$$ Where $AI$ are the $AffineIndices$. Figure \ref{fig:spmm} (a) demonstrates this \textit{span-specialisation}. The thread-block that reads values from the first two rows of $A$ iterates from index 0 to index 2 in line 4 of algorithm \ref{alg:spmm-naive}. 

\textbf{Linear transformation Alignment.} 
To mitigate issue 2, we observe that if rows in the \ds{} have the same affine-indices, data is placed in identical indices across the trailing dimension. We can re-map threads within a warp to operate on rows with identical affine-indices, ensuring that threads execute the body of the main loop in tandem. Figure \ref{fig:spmm} (b) demonstrates this optimization, when loading from $A$, only 2 out of 9 threads diverge in control flow, as opposed to 4 out of 9 without this optimization.   

\subsection{\spmm{} code-generation}

\begin{wrapfigure}{R}{0.5\textwidth}
    \begin{minipage}{0.47\textwidth}
        \input{code-listings/spmm-code-gen} 
    \end{minipage} 
\end{wrapfigure}

The \spmm{} code-generation pass, shown in listing \ref{alg:spmm-code-gen}, ingests \ds{} metadata and code-generates a \spmm{} kernel. It returns an \spmm{} kernel (spmmFunc), thread-block count (TBCount), metadata required for optimizations (spmmMetaOpt), and layout of the \ds{} (layout), (see lines 5-7). The optimizations metadata, spmmMetaOpt, contains a map of thread indices to 2 pieces of information. (1) The respective start and end loop indices, implementing span-specialisation. (2) The respective row of the \ds{} to load, implementing linear-transformation-alignment. The boolean layout flag indicates the data-layout the \ds{} should be for correct data indexing. For high-performance, the layout of the \ds{} depends on the density of the input mask, see section ~\ref{splat} for more details.

\section{Final Code-Generation of Sparse-MHSA}
\label{splat}
SPLAT generates high-performance code for end-to-end implementations of sparse-MHSA. Given a pattern, it produces code for the \sddmm{}, softmax, and \spmm{} kernels, implementing an end-to-end sparseMHSA kernel. SPLAT's code-generation mechanism is shown in algorithm \ref{alg:end-to-end-code-gen}. It proceeds in four passes. (1) An analysis pass (see lines 3-5) analyzes the input mask to generate information used by later code-generation passes and ensures the legality of later optimizations and code-generation. (2) A code-generation pass (see lines 7-9), which ingests the information produced by the analysis pass to generate high-performance: \sddmm{}, Softmax, and \spmm{} kernels. (3) A memory allocation and auxiliary data creation pass (see lines 13-15) which allocates enough memory to hold output tensors and creates a data object required for any optimizations for \sddmm{} and \spmm{} kernels. (4) A data-layout reordering pass (see line 15), which reasons about the data-layout of the input tensor to the \spmm{} kernel, inserting the relevant transposition whenever necessary. 

\begin{figure}[t]
    \input{code-listings/splat-code-gen}
\end{figure}

\textbf{Analysis pass} The analysis pass first checks if the mask is regular (see line 3), terminating otherwise. It then generates the affine-indices by solving a set of linear equations (see line 4) as described in section \ref{acsr}. Finally, it analyzes the number of non-zero values in the mask (density analysis - see line 5), and if this number is greater than a threshold $\alpha$, sets the classification returned to dense, else to sparse. This information is required for data-layout re-ordering optimizations for good end-to-end sparseMHSA performance. 

\textbf{Code-generation} The code-generation passes (see lines 7-9) produce high-performance implementations of the \sddmm{} (see section \ref{section:sddmm}) and \spmm{} (see section \ref{spmm}) kernels, as well as objects required to implement optimizations correctly. We use cuDNN's softmax kernel to implement the softmax over the \ds{}. 

\textbf{Memory Allocation \& Auxiliary Data Creation.} Lines 13-15 allocate the necessary amount of memory required to store the output tensors of each of the kernels: \sddmm{}, Softmax, and \spmm{}, and create an auxiliary data object. This data object contains information required for the correctness of the optimizations detailed in sections \ref{section:greedy-tiling} and \ref{section:spmm}. It contains the metadata (the affine-indices and non-zero-values of the \ds{}) required for fast-indexing, the anchor-points for poset-tiling, and a rspmmMetaOpt structure that contains thread-level mappings for linear-transformation-alignment and span-specialization. 

\textbf{Data-layout reordering.} Reasoning about the data-layout of the \ds{} in the \spmm{} kernel is important for a high-performance implementation of sparse-MHSA. At moderate sparsity levels, reading from an \ds{} within the \spmm{} kernel is more expensive than writing to it within the \sddmm{} kernel. The \sddmm{} kernel only writes to each output value once, but the \spmm{} kernel reads each input multiple times (across multiple thread-blocks). We select the best format for the \spmm{} kernel by considering the global computations and their formats and inserting a transpose kernel before it whenever necessary according to the output of density analysis. For input-masks with high-density, we transpose the \ds{} to a column-compressed \& column-major layout, while for input-masks with low-density, we transpose the \ds{} to a row-compressed \& row-major layout before the \spmm{} kernel. The column-compressed \& column-major allows threads to issue coalesced memory requests but requires complex arithmetic to index compared to the row-compressed \& row-major layout. At higher density levels, these un-coalesced requests bottleneck kernels as the amount of data read is greater. Our ablations in section ~\ref{section:ablations} illustrate this.

\section{Evaluation}
\label{Evaluation}
We evaluate SPLAT against state-of-the-art vendor-libraries (SOTA) and hand-optimized implementations across a \textit{variety} of sparse patterns to demonstrate SPLAT's \textit{generality} and \textit{high-performance}. To this end, we conduct a series of run-time performance studies (section \ref{section:comparitive-study}) and ablations (section \ref{section:ablations}). We perform run-time performance studies at different granularities: individual sparse-kernels (\sddmm{} \& \spmm{}), single layer sparse-MHSA, and end-to-end transformer, to demonstrate the efficacy of SPLAT's code-generation framework. Our ablations identify the importance of the optimizations we have proposed.

In summary, our results show that SPLAT exhibits considerable speedups against vendor-libraries and hand-optimized implementations over a variety of sparse-MHSA patterns. SPLAT can achieve speedups of up to 2.07x and 5.68x over \cublas{} and \cusparse{} across desired sparsity ranges of [10\%, 50\%], respectively, further, it achieves up-to 2.05x and 4.05x over hand-written kernels in \triton{} and TVM, respectively.  

\subsection{Implementation}

We implement SPLAT's GPU code-generation mechanism in C++ and Python. Currently, SPLAT is a CUDA code generation system compatible with JAX. SPLAT outputs CUDA implementations of sparse-MHSA that are just-in-time compiled through JAX's CUDA compatible foreign-function-interfaces (FFIs). We use the following software versions: cudatoolkit 11.6, jaxlib 0.4.6, triton 2.1.0, TVM 0.6.0, and pytorch 2.1.0.

\subsection{Experimental Setup}

We evaluate SPLAT's effectiveness across 3 patterns: windowed, blocked and strided. We select these patterns due to their popularity in the deep-learning community ~\cite{gpt-3, longformer, sparse-transformer, reformer} as well as the availability of highly-optimised hand-written implementations of each using various SOTA tensor-algebra compilers (TVM and triton) and deep-learning frameworks (JAX). We compare SPLAT generated code to SOTA implementations of each pattern at three different granularities: individual sparse-kernel primitives, single-layer sparse-MHSA \& an end-to-end transformer, demonstrating speedups across each case.

\textbf{Sparse-kernel Primitives} We compare SPLAT's code-generated \sddmm{} and \spmm{} kernels for each pattern to highly optimised vendor libraries: \cusparse{} and \cublas{}. 

\textbf{Single-layer sparse-MHSA} We compare SPLAT's code-generated single-layer sparse-MHSA kernel for each pattern to SOTA tensor algebra compilers: triton and TVM. Triton is a mature deep-learning compiler with block-sparse kernel APIs that are faster than \cusparse{}, serving as a fair comparison. 

\textbf{End-to-End Transformer} We implement an end-to-end transformer using SPLAT's code-generated sparse-MHSA kernel for each pattern and compare against end-to-end sparse-transformers written in TVM, and JAX. 

Unless otherwise stated, all our comparisons are conducted at the density levels: [0.4, 0.8, 1.6, 3, 6, 12, 24, 44, 75, 100]. We can tune the sparsity of each pattern by varying the width of the stride (strided-pattern), the size of the window (windowed pattern), or the size of the block (blocked pattern). We note that all the models we compare to are hand-written solutions, specialized to a single sparse-MHSA pattern. SPLAT, on the other hand, is general enough to generate high-performance code for all the patterns.

\vspace{-5pt}
\subsection{Run-time Performance Study}
\label{section:comparitive-study}
\begin{figure}[t]
    \centering
    \includegraphics[width=\linewidth]{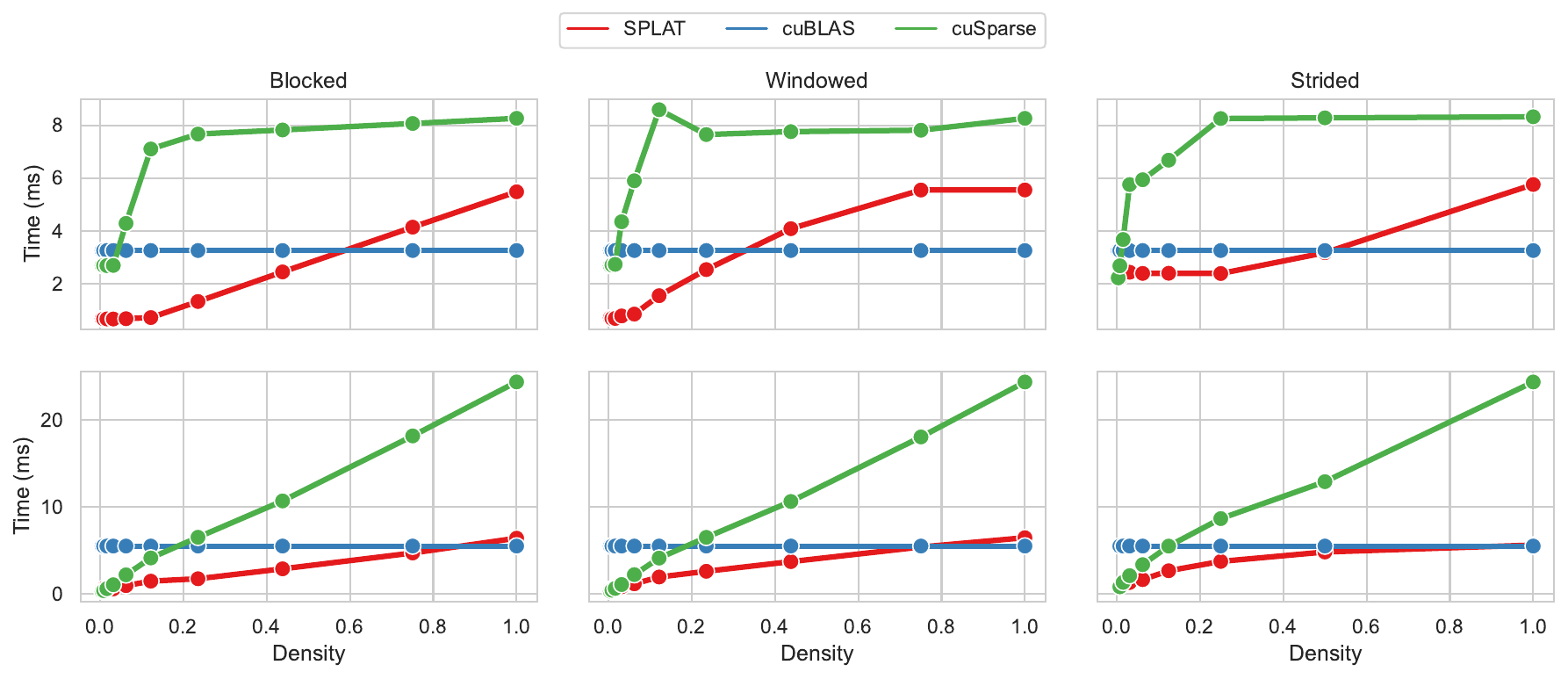}
    \caption{Run-time performance of sparse-primitives: \sddmm{} and \spmm{}, comparing: SPLAT, \cublas{} and \cusparse{}. The top and bottom rows are the \sddmm{} and \spmm{} results, respectively. The desired density levels observed in sparse-MHSA are in the [10,50]\%.}
    \label{fig:comparitive-evaluation-micro}    
\end{figure}

The primary motivation of the run-time performance study is to answer the question: Can SPLAT be used to accelerate end-to-end sparse-MHSA-based models, and is this speedup \textit{as a result} of SPLAT's code-generation methodology? To answer the first part of the question, we evaluate SPLAT at three levels of granularity. To answer the second part of the question, we analyze the memory profiles of individual kernels and conduct a breakdown analysis of single layer sparse-MHSA showing this speedup is a result of SPLAT's code-generation mechanism.


\subsubsection{Individual Kernels}

We evaluate SPLAT's sparse primitive implementations of \sddmm{} and \spmm{} against \cusparse{} and \cublas{} to ascertain whether SPLAT's sparse-MHSA primitives outperform vendor-libraries at the desired sparsity ranges of (10\%-50\%). We compare across three patterns: blocked, windowed, and strided.

\textbf{Speedups.} Figure \ref{fig:comparitive-evaluation-micro} shows our results for runtime performance. For the \sddmm{}, SPLAT experiences geomean speedups of 2.46x \& 5.68x (blocked pattern), 1.29x \& 3.17x (windowed pattern), 1.24x \& 2.93x (strided pattern) over \cublas{} and \cusparse{} respectively. 
For the \spmm{}, SPLAT experiences geomean speedups of 2.81x \& 3.37x (blocked pattern), 2.07x \& 2.47x (windowed pattern), 1.51x \& 2.33x (strided pattern) over \cublas{} and \cusparse{} respectively. All these speedups are reported in the 10\%-50\% density range. 

\subsubsection{Single layer Sparse-MHSA}
\begin{figure}[t]
    \centering
    \includegraphics[width=\linewidth]{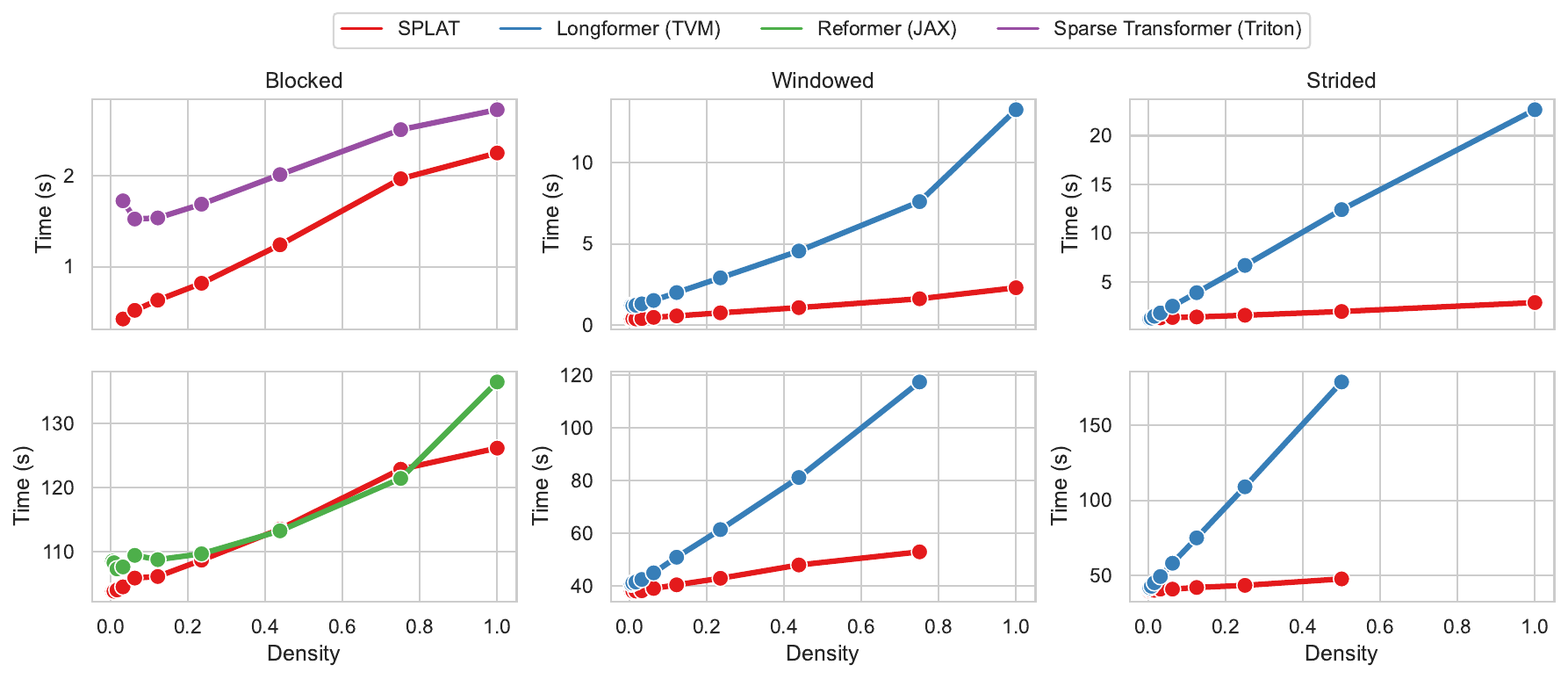}
    \caption{Run-time performance of a single-layer MHSA and end-to-end sparse transformer, comparing SPLAT against sparse transformer (implemented in \triton{}), longformer (implemented in TVM) and reformer (implemented in JAX). The top and bottom rows are the single-layer sparse-MHSA and end-to-end transformer implementations, respectively. Unplotted points at densities of 1.0 in middle and bottom (right) are due to OOM issues.}
    \label{fig:comparitive-evaluation-macro}    
    \vspace{-15pt}
\end{figure}

We evaluate SPLAT's end-to-end sparse-MHSA generated code against hand-written implementations of a single layer of sparse-MHSA written in \triton{} and TVM to ascertain whether SPLAT effectively implements an end-to-end sparse-MHSA solution that reasons about data-layout conversion costs \textit{globally}. 
We compare across three patterns: blocked (implemented in \triton{}), windowed, and strided (implemented in TVM).

\textbf{Speedups} Figure \ref{fig:comparitive-evaluation-macro} (top row) shows our results for runtime performance. SPLAT realizes geomean speedups of 2.05x, 4.05x, and 2.12x over triton, TVM-windowed, TVM-strided respectively across the \textit{entire} density range.

\subsubsection{End-to-End Sparse Transformer}

Third, we evaluate an end-to-end sparse transformer, implementing the sparse-MHSA layer with SPLAT's generated code, against hand-written implementations in TVM and JAX to ascertain whether SPLAT can be used to accelerate a full transformer. We compare across three patterns: blocked, windowed, and strided. 

\textbf{Speedups} Figure \ref{fig:comparitive-evaluation-macro} (bottom row) shows our results for runtime performance. SPLAT experiences geomean speedups of 1.03x, 1.31x, and 1.49x over Reformer (blocked pattern) implemented in JAX, and Longformer (windowed and strided pattern) implemented in TVM, respectively across the \textit{entire} density range. 

\subsubsection{Analysis}
\begin{table*}[t]
\centering
\resizebox{\linewidth}{!}{%
\begin{tabular}{l|l|c|cc|cc|cc}
  \toprule
  \multirow{2}{*}{Kernel} & \multirow{2}{*}{Method} & \multirow{2}{*}{Threads/SM} & \multicolumn{2}{c}{Read (GB)} & \multicolumn{2}{|c}{Write (GB)} & \multicolumn{2}{|c}{Cache Hit Rate (\%)} \\
  \cmidrule(lr){4-5} \cmidrule(lr){6-7} \cmidrule(lr){8-9}
  ~ & ~ & ~ & \multicolumn{1}{c}{$\texttt{Global} \rightarrow \texttt{L2}$} & \multicolumn{1}{c}{$\texttt{L2} \rightarrow \texttt{L1}$} & \multicolumn{1}{|c}{$\texttt{L1} \rightarrow \texttt{L2}$} & \multicolumn{1}{c}{$\texttt{L2} \rightarrow \texttt{Global}$} & \multicolumn{1}{|c}{L1} & \multicolumn{1}{c}{L2} \\
  \midrule
  \multirow{5}{*}{SDDMM} & SPLAT & 1152 & 0.190 & 3.170 & 0.377 & 0.346 & 44.25 & 92.91 \\
  ~ & cuBLAS & 512 & 0.201 & 1.610 &  1.610 & 1.590 & 00.41 & 92.34 \\
  ~ & cuSPARSE & 576 & 0.206 & 7.830 & 0.662 & 0.365 & 42.02 & 96.98 \\
  ~ & \triton{} & 128 & 0.201 & 0.432 & 0.681 & 0.360 & 59.21 & 82.25\\
  ~ & TVM (x32) & 2048 & 0.006 & 0.358 & 0.035 & 0.001 & 75.23 & 98.19 \\
  \midrule
  \multirow{5}{*}{SpMM} & SPLAT & 1152 & 0.101 & 0.602 & 0.805 & 0.084 & 57.53 & 91.10 \\
  ~ & cuBLAS & 512 & 1.720 & 2.420 & 0.100 & 0.970 & 00.10 & 43.43 \\
  ~ & cuSPARSE & 1536 & 0.203 & 11.330 & 1.100 & 0.089 & 53.71 & 97.57 \\
  ~ & \triton{} & 128 & 0.482 & 3.880 & 2.090 & 0.126 & 22.24 & 90.33 \\
  ~ & TVM (x32) & 2048 & 0.002 & 0.172 & 0.060 & 102 KB & 89.94 & 92.35 \\
  \bottomrule
\end{tabular}}
 \caption{Memory profiles of SPLAT, \cublas{}, \cusparse{}, \triton{} and TVM of the blocked pattern (except TVM which is the window pattern) at a density level of 24\%. $\texttt{Global} \rightarrow \texttt{L2}$ and $\texttt{L2} \rightarrow \texttt{L1}$ is the amount of data transferred from global-memory to L2 cache, and L2 to L1 cache, respectively, as a result of memory reads. $\texttt{L1} \rightarrow \texttt{L2}$ and $\texttt{L2} \rightarrow \texttt{Global}$ is the amount of data transferred from L1 to L2 cache, and L2 to global-memory, respectively, as a result of memory writes. TVM spawns 32 kernels, hence the (x32) notation.}
 \label{table:profile-information-micro}
 \vspace{-10pt}
\end{table*}

We now analyze how SPLAT's sparse-primitives achieve speedups over optimized vendor-libraries and hand-written kernels in \triton{} and TVM. We show that SPLAT's novel code-generation algorithms leverage the meta-data stored in the ACSR effectively to produce favorable memory access and write patterns balanced with enough inter-warp parallelism to hide read/write latencies. We show this by analyzing the memory profiles of all vendor-libraries, hand-written kernels, and SPLAT at a density level of 24\% for the blocked pattern. Favorable memory access patterns will read similar amounts of data from global memory to L2 cache, and from L2 to L1 cache, reducing extraneous data-movement through the memory hierarchy; we compute how much more data is transferred from L2 to L1 (denoted as $\texttt{L2} \rightarrow \texttt{L1}$), compared to global memory to L2 (denoted as $\texttt{Global} \rightarrow \texttt{L2}$) for all hand-written kernels, libraries, and SPLAT. We apply a similar argument to memory write-back patterns, computing the excess data written from L1 to L2 ($\texttt{L1} \rightarrow \texttt{L2}$), compared with L2 to global memory ($\texttt{L2} \rightarrow \texttt{Global}$). Our results are in table \ref{table:profile-information-micro}. We systematically compare SPLAT to each state-of-the-art vendor library and hand-written kernel. 

\begin{wrapfigure}[18]{R}{0.5\textwidth}
    \includegraphics[width=0.48\textwidth]{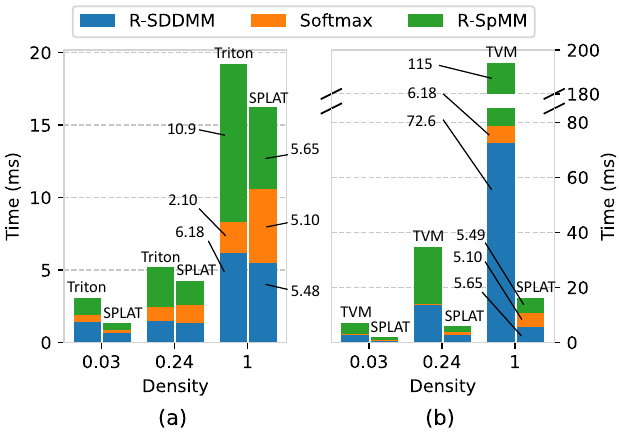}
    \caption{A breakdown analysis of the three components: \sddmm{}, Softmax and \spmm{} of SPLAT, \triton{} and TVM's sparse-MHSA primitives. (a) is the blocked pattern, (b) is the windowed pattern.}
    \label{fig:breakdown}
\end{wrapfigure}

\textbf{Vendor-libraries} Analyzing thread access patterns, we report the excess data moved across the memory hierarchy due to the reading of data: 2.98GB (0.501GB), 1.409GB (0.7GB), and 7.624 (11.127GB) for SPLAT, \cublas{}, and \cusparse{} respectively for the \sddmm{} (\spmm{}) kernel. We observe SPLAT's thread-access patterns move significantly less data across the memory hierarchy compared to \cusparse{}, and slightly more compared to \cublas{}. We note \cublas{}, as a dense mat-mul, has regular thread access patterns indexing dense 2-D arrays (as opposed to complex sparse structures), and is thus amenable to favorable access patterns. Nevertheless, SPLAT spawns more threads per streaming-multiprocessor (SM), thus effectively latency hiding expensive memory read operations through inter-warp parallelism. Since \cublas{}'s kernel is compute-bound, there is enough reuse to circumvent the need to latency hide memory reading costs. 

We similarly report the excess data moved across the memory hierarchy as a result of write-backs: 0.031GB (0.721GB), 0.02GB (0.87GB), and 0.257GB (1.01GB) for SPLAT, \cublas{}, and \cusparse{} respectively for the \sddmm{} (\spmm{}) kernel. We observe that the write-back pattern profile of SPLAT is comparable to \cublas{}, and moves significantly less extraneous data compared to \cusparse{}. Overall, since SPLAT computes less than 1/4th of the values compared to \cublas{}, and has favorable access/write-back patterns it is the fastest of the three. 

\textbf{Hand-written Kernels} Analyzing thread access patterns, we report the excess data moved across the memory hierarchy: 2.98GB (0.501GB), 0.231 (3.398GB), and 0.352GB (0.17GB) for SPLAT, \triton{} and TVM respectively for the \sddmm{} (\spmm{}) kernel. We observe that SPLAT's access patterns are better than \triton{}'s \spmm{} and both of TVM's kernels (TVM spawns 32 kernels, one for each batch, thus operates on 1/32nd the amount of data compared to SPLAT and \triton{}). Though \triton{}'s \sddmm{} access patterns are slightly better than SPLAT's, it spawns 9x fewer threads per SM, inadequately hiding read/write latencies. A closer inspection of the kernel indicates this is a result of overusing shared-memory. 

\begin{wrapfigure}[14]{R}{0.5\textwidth}
    \includegraphics[width=0.48\textwidth]{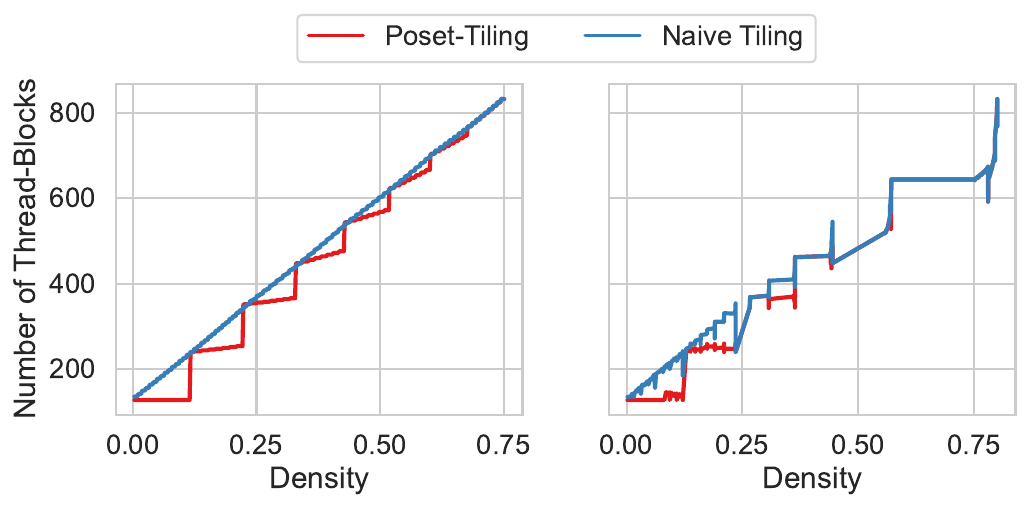}
    \caption{A comparison of the number of thread-blocks used between different Tiling Strategies for the window (left) and blocked (right) pattern. Lower is better.}
    \label{fig:sddmm-tiling-ablation}
\end{wrapfigure}

Similarly, we report the excess data moved through the memory hierarchy as a result of write-backs: 0.031GB (0.721GB), 0.321GB (1.964GB), and 0.034GB (0.06GB) for SPLAT, \triton{} and TVM, respectively for the \sddmm{} (\spmm{}) kernel. We observe that SPLAT's write-back patterns are better than both \triton{} and TVM's.  

\textbf{Breakdown Analysis} To show that end-to-end sparse-transformers are accelerated due to SPLAT's high-performance code-generation mechanism, we break down the run-times of SPLAT's \sddmm{}, softmax, and \spmm{} kernels in a single sparse-MHSA layer and compare it to triton and TVM in \ref{fig:breakdown}. We breakdown these run-times across high, moderate, and low sparsity levels. We see that across all sparsity levels, the collective run-time of SPLAT's kernels is faster than \triton{} and TVM's. 


\subsection{Ablation Studies}
\label{section:ablations}

\subsubsection{\sddmm{} Tiling} 
High-performance arrangements use a minimal number of tiles. We compare the number of thread-blocks used in poset-tiling against a Naive tiling approach for the blocked and windowed pattern. We use a sequence length of 1024 and vary the density of each pattern across all possible values. The results are in figure \ref{fig:sddmm-tiling-ablation}. Poset tiling reduces the number of thread-blocks by 1.098 and 1.095 for the window and blocked pattern respectively, on average. The maximum reduction is 1.83 and 1.72 which uses 13568 and 12544 fewer threads for the window and blocked pattern respectively. 

\begin{wrapfigure}[14]{R}{0.5\textwidth}
    \includegraphics[width=0.48\textwidth]{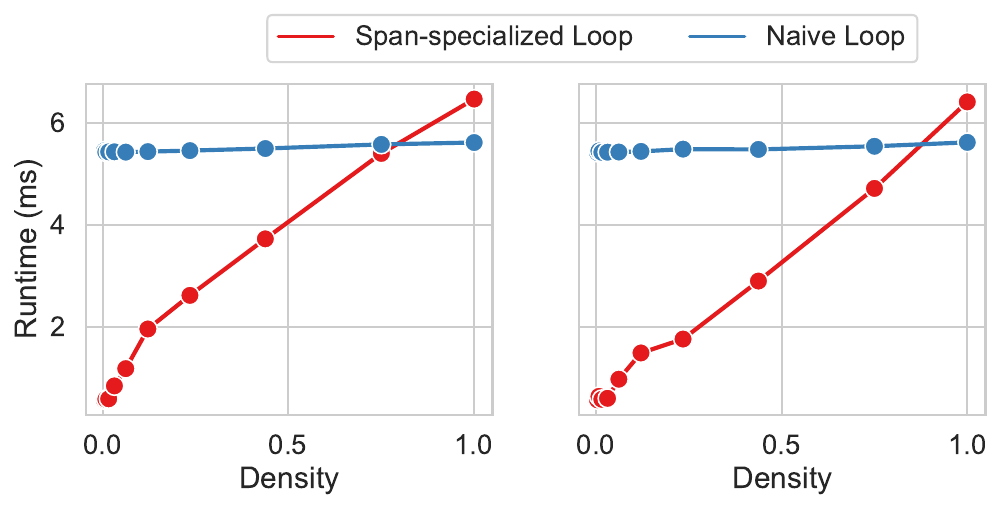}
    \caption{A comparison between the runtimes of the \spmm{} kernel with and without span-specialisation. Left and right are the window and blocked patterns respectively.}
    \label{fig:span-specialisation-ablation}
\end{wrapfigure}

\subsubsection{\spmm{} Optimisations}

We evaluate the benefit of the optimizations: span-specialisation and linear-transformation alignment on \spmm{} kernels at varying density levels by comparing these optimizations to implementations where they are disabled.

Figure \ref{fig:span-specialisation-ablation} shows the results for the effects of span-specialisation on the runtime of \spmm{} kernels. We fix a sequence length of 1024 and vary the density of the window and blocked pattern in [0.4, 0.8, 1.6, 3, 6, 12, 24, 44, 75, 100]. Across these density levels, span-specialization results in geomean speedups of: 3.4x and 3.96x for the windowed and blocked pattern respectively. This shows for the density ranges observed in sparse-MHSA [10,50]\%, span-specialization achieves speedups. However, for extremely dense inputs where loop counts span the entire trailing dimension, span-specialization can be costly due to extra integer arithmetic to compute the loop start and end indices.

Figure \ref{fig:transformation-alignment-ablation} shows the results for the effects of linear-transformation alignment on loads from a regularly sparse matrix in the \spmm{} kernel (algorithm \ref{alg:spmm-naive} line 8). We fix a sequence length of 1024 and vary the density for the strided pattern across all possible values (varying the stride from 1 to 1024). We compare to a \spmm{} kernel without this optimization (naive loads). Linear-transformation alignment reduces control divergence of loads by 2.73, on average, and the maximum reduction is 8.1.

\begin{wrapfigure}[19]{R}{0.5\textwidth}
    \includegraphics[width=0.48\textwidth]{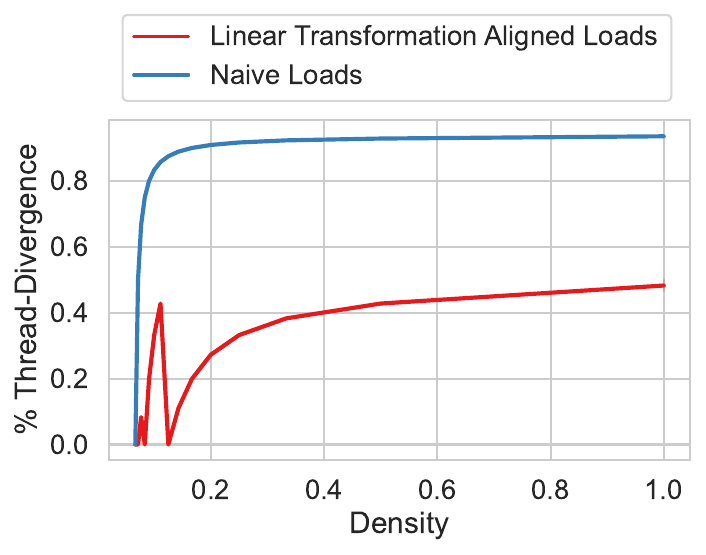}
    \caption{A comparison between the percentage of threads that exhibit control divergence of loads to a regularly sparse matrix in the \spmm{} kernel for the strided pattern (figure \ref{fig:regular-sparsity} middle). Lower is better.}
    \label{fig:transformation-alignment-ablation}
\end{wrapfigure}

\subsubsection{Data-Layout Exploration} We evaluate the benefits of different \ds{} layouts in \spmm{} kernels in figure \ref{fig:data-layout-reordering}. We compare two layouts combined with the cost of transposing data into these layouts: row-compressed \& row-major against column-compressed \& column-major as these produce the fastest \spmm{} kernels across various density levels. We compare the runtimes of the \spmm{} kernels with these layouts across density levels [0.4, 0.8, 1.6, 3, 6, 12, 24, 44, 75, 100]. We categorize these density levels into sparse inputs (density <10\%) and dense inputs (density $\geq$10\%).

\textbf{Sparse Inputs} Across density levels lower than 10\%, the \spmm{} kernel which uses the row-compressed \& row-major layout experiences a geomean speedup of 1.37x compared to a column-compressed \& column-major layout. The row-compressed \& row-major layout requires less complex index arithmetic to reference data at the cost of un-coalesced accesses to non-zero values which happens due to contiguous elements in a column being placed far apart in memory. At these sparsity levels, the memory bandwidth has not yet saturated, hence the cost of un-coalesced accesses is relatively low, resulting in faster \spmm{} kernels in this layout.  

\textbf{Dense Inputs} Across density levels greater than 10\% the \spmm{} kernel which uses the column-compressed \& column-major layout experiences a geomean speedup of 1.6x compared to a row-compressed \& row-major layout. The column-compressed \& column-major layout enables coalesced references of data, since contiguous elements in a column are placed in contiguous memory addresses, at the cost of more complex arithmetic to reference this data. For dense inputs where we read a lot of data, the performance benefits of memory coalescing outweigh that of simplified arithmetic, resulting in faster \spmm{} kernels in this layout. 

\begin{wrapfigure}{R}{0.5\textwidth}
    \includegraphics[width=0.5\textwidth]{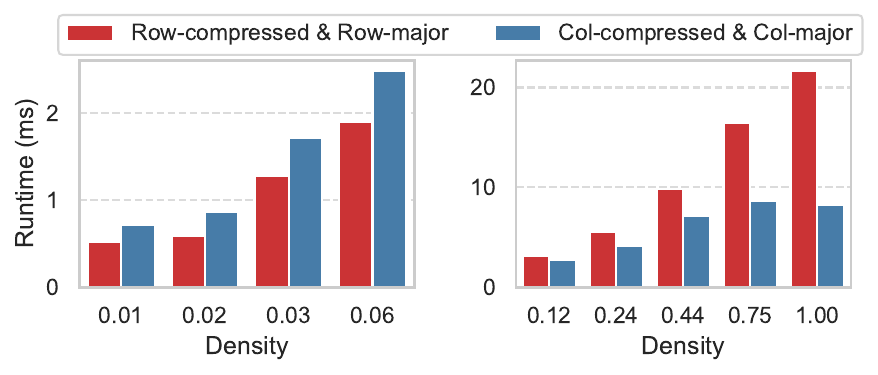}
    \caption{A comparison of the combined runtimes of the transposition and \spmm{} kernels for different \ds{} layouts at various density levels. We use the window pattern. Lower is better.}
    \label{fig:data-layout-reordering}
\end{wrapfigure}

\section{Related Work}
\label{Related Work}
\textbf{Polyhedral Compilation}  Polyhedral compilers \cite{tiramisu, polly, polymage, tensor-comprehensions, pluto} are capable of source-to-source translation of sequential C code into parallel-code (e.g. OpenMP). They work by applying the polyhedral model, decomposing loops into iteration spaces, and analyzing their data-dependencies to parallelize computation. Nevertheless, polyhedral analysis is limited to codes where index arithmetic are affine functions of loop variables. Because sparse tensor algebra code indexes sparse structures through various indirections, polyhedral analysis is ineffective in uncovering parallelization and tiling strategies.

\textbf{Sparse Tensor Algebra Compilers} Popular sparse tensor algebra compilers like TACO \cite{taco}, SparseTIR \cite{sparsetir} are capable of emitting parallel CPU and GPU codes for sparse tensor arithmetic. Nonetheless, the optimisations suggested in SPLAT are not available in these tensor-compilers and hence cannot generate high-performance code for a variety of sparse-MHSA patterns. 


\textbf{Structured Sparsity} NVIDIA introduced a new specialized data-path in the Ampere \cite{ampere} architecture to compute \textit{structured-sparsity} \cite{structured-sparsity}. These instructions and special hardware units compute 1:2, 2:4 structured-sparsity where the largest absolute value of every 2 (or two largest values out of every 4) elements is kept, and the rest are pruned. Although these instructions target moderate sparsity levels, they cannot represent any of the sparse-MHSA patterns since structured-sparsity is input dependent, resulting in a random distribution of non-zero values. 

\textbf{Sparse Formats} There have been a wide variety of sparse-formats proposed in the literature, please look at the survey in \cite{spmm-survey} for further details (e.g. CSR, COO, ELLPACK, DCSR, DIA, BCSR, CSB, CSF to name a few). Nevertheless, traditional sparse formats are too general, and custom sparse formats (e.g. BCSR, DIA) are too specific to implement high-performance kernels for a variety of sparse-MHSA patterns.

\section{Conclusion}
\label{Conclusion}
We have described SPLAT, an optimized code-generation framework that targets a variety of sparse-MHSA patterns. SPLAT introduces novel tiling strategies and optimizations for better thread access patterns, and reasons about data-layout conversion costs in a \textit{global} manner. SPLAT exploits the regular nature of sparse-MHSA patterns, introducing a new sparse-format: \ds{}, that enables SPLAT's code-generation schemes to have favorable memory-access patterns. We use SPLAT to implement a variety of sparse-MHSA patterns and transformers, demonstrating its generality and high-performance. Our experiments show that SPLAT realizes geomean speedups of 2.05x and 4.05x over hand written kernels written in \triton{} and TVM respectively.

\section*{Acknowledgements}
We would like to thank Jai Arora and Wanyu Zhao for their feedback on early drafts of this paper. This work is partly supported by ACE, one of the seven centers in JUMP 2.0, a Semiconductor Research Corporation (SRC) program sponsored by DARPA.

\bibliographystyle{ACM-Reference-Format}
\bibliography{sample-base}

\newpage
\appendix
\section*{Appendix}
\appendix
\setcounter{definition}{6}
\setcounter{theorem}{2}

\section{Stretching in Polygonal Patterns}
\label{supp:stretch-poly}

\begin{theorem}
    If we assume that the anchor of the thread block $TB_i$ is present in a polygonal mask, i.e. $Anc(TB_i) \in P$, then $| Cov(TB_i) |$ would be maximum when its stretch factor $Str(TB_i) = 1$.
\end{theorem}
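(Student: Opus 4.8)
The plan is to show that stretching a thread-block whose anchor lies in a polygonal point-set can only remove points from its cover, hence $s=1$ is optimal. First I would recall the definitions: for a thread-block with anchor $t=(x_0,y_0)$ and stretch factor $s$, we have $Comp(TB_i) = \{(x_0 + is,\, y_0 + js) \mid i \in \{0,\dots,m-1\},\, j \in \{0,\dots,n-1\}\}$, and $Cov(TB_i) = Comp(TB_i) \cap P$. The key structural fact I would invoke is the geometry of a polygonal pattern: in each row the non-zero columns form a contiguous interval (no gaps), and — because these masks are the windowed/blocked patterns, which are "staircase"-shaped — the set of non-zero rows for a fixed column is also a contiguous interval, and moreover the row-intervals are nested/monotone in a way that makes $P$ "staircase-convex": if $(x_0,y_0)\in P$ and $(x_0+\delta, y_0)\in P$ with $\delta>0$, then every $(x_0 + e, y_0)\in P$ for $0 \le e \le \delta$, and similarly along columns, and a filled axis-aligned rectangle anchored inside $P$ that fits at $s=1$ stays inside $P$.

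The main step is a counting/injection argument. Fix $s \ge 2$. I would exhibit an injection $\iota$ from $Comp(TB_i)$ with stretch $s$ into $Comp(TB_i')$ with stretch $1$ (same anchor, same $m,n$): namely $\iota(x_0+is,\,y_0+js) = (x_0+i,\,y_0+j)$. This is clearly a bijection onto the $s=1$ compute set. The claim I need is that $\iota$ maps covered points to covered points: if $(x_0+is, y_0+js)\in P$, then $(x_0+i, y_0+j)\in P$. Here is where staircase-convexity of the polygonal mask, together with $Anc(TB_i)=(x_0,y_0)\in P$, is used: the anchor is in $P$, the stretched point is in $P$, and $(x_0+i,y_0+j)$ lies "between" them coordinatewise (since $0 \le i \le is$ and $0 \le j \le js$), so it too must be in $P$. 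This gives $|Cov(TB_i)| \le |Cov(TB_i')|$ for every $s \ge 2$, and since the map is a bijection on compute sets, equality of cardinalities of compute sets shows the $s=1$ cover is at least as large; hence $|Cov|$ is maximized at $s=1$.

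The hard part will be pinning down exactly which convexity property the polygonal masks (windowed, blocked) actually satisfy and showing the "betweenness implies membership" step rigorously for each — a general polygon need not be coordinatewise-convex, so the argument genuinely relies on the specific staircase/band structure of these two patterns rather than on polygonality in the loose sense used in the informal discussion. I would therefore first establish a lemma characterizing the row- and column-supports of the windowed and blocked masks (each row's support is an interval, each column's support is an interval, and the left/right endpoints are monotone non-decreasing functions of the row index), and then derive the betweenness-closure property from that lemma before running the injection argument above. A secondary, minor obstacle is the boundary bookkeeping — ensuring $i$ ranges over $\{0,\dots,m-1\}$ and that off-pattern compute points on the $s=1$ side simply do not hurt the inequality — but this is routine once the closure lemma is in place.
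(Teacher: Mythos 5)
Your compress-by-$s$ injection is a genuinely different route from the paper's argument (the paper simply asserts that once the right or bottom edge of the $s=1$ block crosses the mask boundary, stretching ``only finds more points outside''), but the step your route hinges on has a real gap. The closure property you invoke is stated as coordinatewise betweenness-closure, and that is false even for the windowed band: $(0,0)$ and $(5,5)$ lie in the width-$1$ band $|x-y|\le 1$ while $(5,0)$ does not. What actually saves the window pattern is that the compressed point $(x_0+i,\,y_0+j)$ equals the anchor plus $\frac{1}{s}(is,\,js)$, i.e.\ it lies on the straight segment joining the anchor to the stretched point, and the band is a convex region; so the hypothesis you need there is plain convexity, and your proposed lemma (row and column supports are intervals whose endpoints are monotone in the row index) is neither the right property nor sufficient.

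More seriously, for the blocked pattern as the paper itself formalizes it in Appendix C (the slab of rows $[kw,(k+1)w)$ occupies columns $[kw,\,kw+2w)$), the closure you need is false even though that mask satisfies your interval-plus-monotone-endpoints lemma. Take $w=2$: the anchor $(3,0)\in P$, and with $i=j=1$, $s=4$ the stretched point $(7,4)\in P$ (row $4$ spans columns $4$--$7$), yet the compressed point $(4,1)\notin P$ (row $1$ only reaches column $3$). So $\iota$ does not map covered points to covered points, and no lemma of the form you describe can rescue it, since the blocked staircase satisfies that lemma while violating the closure. The failure is not merely technical: with anchor $(3,0)$ and a $2\times 2$ thread-block, stretch $1$ covers only $\{(3,0),(3,1)\}$ while stretch $2$ covers $\{(3,0),(3,2),(5,2)\}$, so the cover is not maximized at $s=1$ for this ``polygonal'' mask. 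Any complete proof --- yours, or the paper's two-line edge argument, which glosses over exactly this case --- needs a convexity-type hypothesis that holds for the windowed band but not for the blocked staircase, or some other strengthening of what ``polygonal'' means; with such a hypothesis your segment-convexity version of the injection argument does go through cleanly for the window pattern.
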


\begin{proof}
    The argument is pretty simple here.
    
    If $| Cov(TB_i) | = mn$ for $| Str(TB_i) | = 1$, then it is already at its maximum since $| Cov(TB_i) | \le | Comp(TB_i) | = mn$.

    If $| Cov(TB_i) | < mn$ for $| Str(TB_i) | = 1$, then one of the right edge or the bottom edge is crossing the polygonal mask boundary. WLOG, let us assume that the right edge is outside the polygonal mask. If we were to increase the stretch of the thread block, we would only find more points not belonging to the mask. Thus, the $| Cov(TB_i) |$ will never increase beyond $| Str(TB_i) | = 1$.
\end{proof}

\section{Stretching in Strided Patterns}
\label{supp:stretch-strided}

\begin{theorem}
    If we assume that the anchor of the thread block $TB_i$ is present in a strided mask, i.e. $Anc(TB_i) \in P$ and that the stride $X$ is divisible by its stretch factor $s = Str(TB_i)$, i.e. $X = \kappa s$, we can precisely calculate: $$| Cov(TB_i) | = \bigg\lceil \frac{m - n + 1}{\kappa} \bigg\rceil \times n + \sum_{k = \lceil \frac{m - n + 1}{\kappa} \rceil}^{\lceil \frac{m - 1}{\kappa} \rceil} (m - k \kappa) + \sum_{k = 1}^{\lceil \frac{n - 1}{\kappa} \rceil} (n - k \kappa) $$
\end{theorem}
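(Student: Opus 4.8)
The plan is to reduce $|Cov(TB_i)|$ to a plain count of lattice points inside the logical $m\times n$ thread rectangle, and then to evaluate that count by slicing it into residue classes modulo $\kappa$. First I would make the strided mask explicit: up to translation, a strided pattern of stride $X$ is the point-set $P=\{(x,y):x\equiv y\ (\mathrm{mod}\ X)\}$, taken over a region large enough that the thread-block in question sits entirely inside it; this is the implicit setting of the statement, since the claimed formula depends only on $m$, $n$, $\kappa$ and not on the mask extent. For concreteness I take $m\ge n$ (the case $m<n$ is symmetric). Under this description the hypothesis $Anc(TB_i)=t=(t_x,t_y)\in P$ says precisely $t_x\equiv t_y\ (\mathrm{mod}\ X)$.

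Next I would turn membership in $Cov$ into arithmetic. By the definition of a thread-block's compute set, the thread at logical position $(i,j)$ with $0\le i\le m-1$ and $0\le j\le n-1$ computes the point $t+(is,js)$, and this lies in $P$ iff $(t_x+is)-(t_y+js)\equiv 0\ (\mathrm{mod}\ X)$. Using $t_x-t_y\equiv 0$ this becomes $s(i-j)\equiv 0\ (\mathrm{mod}\ X)$, and since $X=\kappa s$ it collapses to the clean condition $i\equiv j\ (\mathrm{mod}\ \kappa)$. Hence
\[
|Cov(TB_i)| \;=\; \#\bigl\{(i,j):\ 0\le i\le m-1,\ 0\le j\le n-1,\ i\equiv j\ (\mathrm{mod}\ \kappa)\bigr\}.
\]

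I would then count this set by grouping its pairs along the diagonals $k=(i-j)/\kappa\in\mathbb{Z}$. For a fixed $k$, the admissible second coordinates are exactly $\max(0,-k\kappa)\le j\le\min(n-1,\,m-1-k\kappa)$, so that class contributes $\min(n,m-k\kappa)-\max(0,-k\kappa)$ pairs whenever this quantity is nonnegative. Splitting on the sign of $k$ then gives three groups: for $0\le k\le\lfloor(m-n)/\kappa\rfloor$ the class is ``full'' and contributes exactly $n$, and there are $\lceil(m-n+1)/\kappa\rceil$ such classes (using the identity $\lceil(m-n+1)/\kappa\rceil=\lfloor(m-n)/\kappa\rfloor+1$), which produces the term $\lceil(m-n+1)/\kappa\rceil\,n$; for the next values of $k$ the class contributes $m-k\kappa$, producing the middle sum; and for $k=-k'$ with $k'\ge 1$ the class contributes $n-k'\kappa$ (here $m\ge n$ forces the lower limit on $j$ to be $k'\kappa$ and the upper limit to be $n-1$), producing the last sum. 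Adding the three groups gives the claimed expression.

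The step I expect to be the main obstacle is the endpoint bookkeeping, not the counting idea: one must justify the upper cutoffs $\lceil(m-1)/\kappa\rceil$ and $\lceil(n-1)/\kappa\rceil$ of the two sums, i.e. argue that the sums are to be read as truncated at the last positive summand (equivalently, the upper ceilings replaced by floors) because any further term is non-positive and hence negligible; this is automatic in the power-of-two block-size regime used in practice (where $\kappa\mid m$ and $\kappa\mid n$) but needs a word in general. One must also verify the floor/ceiling identity above and keep the standing ``interior, $m\ge n$'' assumptions in force so that clipping by the true mask boundary never enters. Once those points are pinned down, the remainder is routine arithmetic.
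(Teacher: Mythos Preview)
Your plan is essentially the same as the paper's proof: both reduce membership in $Cov(TB_i)$ to the congruence $\kappa\mid(i-j)$ on the logical $m\times n$ grid (the paper does this via an explicit parametrisation of $P$, you via $P=\{x\equiv y\ (\mathrm{mod}\ X)\}$, but the algebra is identical), and both then count by slicing the rectangle along diagonals of constant $i-j$ into a ``full'' middle band and two triangular tails, which give precisely the three displayed summands. Your flag about the ceiling upper limits of the two tail sums is well taken---the paper uses the same ceilings without comment, and your reading (truncate at the last positive term, equivalently replace $\lceil\cdot\rceil$ by $\lfloor\cdot\rfloor$ there) is the right way to make the formula literally correct in the non-divisible case.
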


\begin{proof}
    Let's assume that point $p$ is present in both $Comp(TB_i)$ and $P$. Let $(x,y) = Anc(TB_i)$.\\
    Since $(x, y) \in P$, $(x, y) = (\alpha_0 \kappa s + t_0, \beta_0 \kappa s + t_0)$ for some $\alpha_0, \beta_0 \in \mathbb{N}_0$, $t_0 \in \mathbb{Z}_{\kappa s}$ \\
    Since $p \in Comp(TB_i)$, $p = (x + is, y + js)$ for some $i \in \mathbb{Z}_n, j \in \mathbb{Z}_m$ \\
    Since $p \in P$, $p = (\alpha_1 \kappa s + t_1, \beta_1 \kappa s + t_1)$ for some $\alpha_1, \beta_1 \in \mathbb{N}_0$, $t_1 \in \mathbb{Z}_{\kappa s}$ \\
    Therefore, $x + is = \alpha_1 \kappa s + t_1$ and $y + js = \beta_1 \kappa s + t_1$. \\
    $\implies x + is - \alpha_1 \kappa s = y + js - \beta_1 \kappa s$ \\
    $\implies \alpha_0 \kappa s + t_0 + is - \alpha_1 \kappa s = \beta_0 \kappa s + t_0 + js - \beta_1 \kappa s$ \\
    $\implies \alpha_0 \kappa s + is - \alpha_1 \kappa s = \beta_0 \kappa s + js - \beta_1 \kappa s$ \\
    $\implies \alpha_0 \kappa + i - \alpha_1 \kappa = \beta_0 \kappa + j - \beta_1 \kappa$ [Since $s \neq 0$] \\
    $\implies \kappa (\alpha_0 - \alpha_1 - \beta_0 + \beta_1) = j - i$ \\
    $\implies \kappa$ divides $(j - i)$ [Since $\alpha_0, \alpha_1, \beta_0, \beta_1 \in \mathbb{N}_0$]
    \medskip
    
    We know that $| Comp(TB_i) | = mn$ \\
    $| Cov(TB_i) |$ is the number of points present in both $Comp(TB_i)$ and $P$ which is equivalent to number of solutions of $(i, j) \in \mathbb{Z}_n \times \mathbb{Z}_m$ satisfying $\kappa$ divides $(j - i)$.
    \medskip
    
    As shown in Fig~\ref{fig:stride-effectiveness-proof-plot}, we can partition the thread block into three sections. \\
    Green section contains threads with $(j - i) \in \{ -n, \dots, -1 \}$. \\
    Pink section contains threads with $(j - i) \in \{ 0, \dots, m-n \}$. \\
    Blue section contains threads with $(j - i) \in \{ m-n+1, \dots, m \}$.

    $$\text{Number of solutions in green section} = \sum_{k = 1}^{\lceil \frac{n - 1}{\kappa} \rceil} (n - k \kappa) $$
    $$\text{Number of solutions in pink section} = \bigg\lceil \frac{m - n + 1}{\kappa} \bigg\rceil \times n  $$
    $$\text{Number of solutions in blue section} = \sum_{k = \lceil \frac{m - n + 1}{\kappa} \rceil}^{\lceil \frac{m - 1}{\kappa} \rceil} (m - k \kappa) $$
    
    Therefore, $$| Cov(TB_i) | = \bigg\lceil \frac{m - n + 1}{\kappa} \bigg\rceil \times n + \sum_{k = \lceil \frac{m - n + 1}{\kappa} \rceil}^{\lceil \frac{m - 1}{\kappa} \rceil} (m - k \kappa) + \sum_{k = 1}^{\lceil \frac{n - 1}{\kappa} \rceil} (n - k \kappa)$$
\end{proof}

\begin{figure}[H]
    \centering
    \includegraphics[scale=0.3]{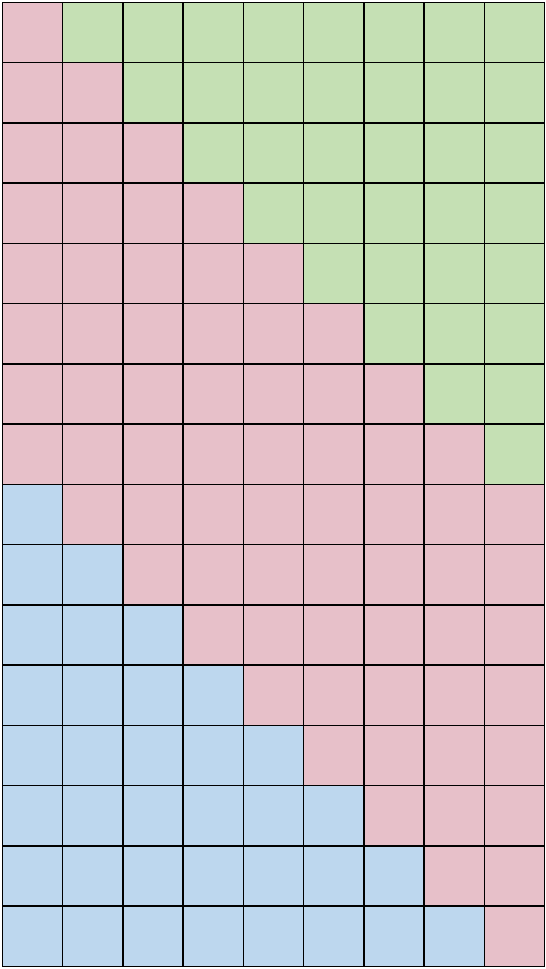}
    \caption{Partitioning the Thread Block based on $j-i$ value}
    \label{fig:stride-effectiveness-proof-plot}
\end{figure}

\begin{theorem}
    Stronger form: If we assume that the anchor of the thread block $TB_i$ is present in a strided mask, i.e. $Anc(TB_i) \in P$ with stretch factor $s = Str(TB_i)$, we can precisely calculate: $| Cov(TB_i) | = f(\kappa)$ where $$f(\kappa) = \bigg\lceil \frac{m - n + 1}{\kappa} \bigg\rceil \times n + \sum_{k = \lceil \frac{m - n + 1}{\kappa} \rceil}^{\lceil \frac{m - 1}{\kappa} \rceil} (m - k \kappa) + \sum_{k = 1}^{\lceil \frac{n - 1}{\kappa} \rceil} (n - k \kappa) \qquad \text{and} \qquad \kappa = \frac{X}{gcd(s, X)}$$
\end{theorem}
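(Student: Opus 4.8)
The plan is to reduce the stronger form to the preceding (divisible-stride) theorem by observing that, for counting purposes, the stretch factor $s$ only ever enters through the quantity $\gcd(s,X)$. First I would recall the description of the strided point-set used in that proof: a pair $(x,y)$ lies in $P$ exactly when its two coordinates share the same residue modulo the stride, i.e. $x \equiv y \pmod X$ (with the same nonnegativity constraints and outer-mask simplification made there). Since the anchor $Anc(TB_i) = (x_0, y_0)$ is assumed to lie in $P$, we have $x_0 \equiv y_0 \pmod X$, and a thread at grid position $(i,j) \in \mathbb{Z}_n \times \mathbb{Z}_m$ covers the point $(x_0 + i s,\, y_0 + j s)$.

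Next I would turn ``this covered point lies in $P$'' into a single congruence in $i - j$. Subtracting coordinates and cancelling $x_0 \equiv y_0 \pmod X$ gives that $(x_0 + is,\, y_0 + js) \in P$ if and only if $s(i-j) \equiv 0 \pmod X$. The key arithmetic step is the standard divisibility fact that, for an integer $c$, one has $s c \equiv 0 \pmod X$ if and only if $X/\gcd(s,X)$ divides $c$. Applying this with $c = i - j$ shows the covered point is in $P$ precisely when $\kappa \mid (i-j)$ for $\kappa = X/\gcd(s,X)$. When $s \mid X$ this is exactly the condition $\kappa \mid (i-j)$ with $\kappa = X/s$ used before; when $s \nmid X$, the row-threads run through $\kappa = X/\gcd(s,X)$ distinct residues modulo $X$ and the in-$P$ pattern along a row repeats with period $\kappa$ in the thread index, which is exactly what replacing $s$ by $\gcd(s,X)$ captures.

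I would then finish by quoting the count already carried out in the preceding theorem. By definition $|Cov(TB_i)| = |Comp(TB_i) \cap P|$ is the number of pairs $(i,j) \in \mathbb{Z}_n \times \mathbb{Z}_m$ with $\kappa \mid (i-j)$; and the earlier derivation of $f$ --- partitioning the $m \times n$ block into the three regions determined by the value of $j - i$ (the green/pink/blue decomposition of Figure~\ref{fig:stride-effectiveness-proof-plot}) and summing, in each region, the lattice points whose $j - i$ is a multiple of $\kappa$ --- uses only that $\kappa$ is the modulus of that congruence, never the identity $\kappa = X/s$. Substituting the new value $\kappa = X/\gcd(s,X)$ therefore yields $|Cov(TB_i)| = f(\kappa)$ verbatim, which is the claim.

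The main obstacle is not depth but care: making the divisibility equivalence $s c \equiv 0 \pmod X \Leftrightarrow (X/\gcd(s,X)) \mid c$ fully rigorous, and then verifying that the two regimes line up --- $\gcd(s,X) = s$ must recover the preceding theorem exactly, while $\gcd(s,X) < s$ is the genuinely new case in which row-threads skip residues --- together with confirming that the boundary simplification used in the earlier strided analysis remains in force here so that the three-region summation applies without change.
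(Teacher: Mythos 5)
Your proposal is correct and takes essentially the same route as the paper: reduce membership of a covered thread's point in $P$ to the congruence $s(i-j)\equiv 0 \pmod{X}$, use the gcd factorization ($X=\kappa\tau$, $s=\eta\tau$, $\gcd(\kappa,\eta)=1$) to conclude $\kappa \mid (i-j)$ with $\kappa = X/\gcd(s,X)$, and then reuse the three-region count from the divisible-stride theorem verbatim. The paper's proof is exactly this, at the same level of rigor regarding the boundary simplification you flag.
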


\begin{proof}
    Similar to previous proof, we can conclude $(\alpha_0 - \alpha_1 - \beta_0 + \beta_1) d = (j - i) s$. \\
    $\implies \frac{(j-i)s}{d} \in \mathbb{Z}$. \\
    Let $\tau = gcd(s, X)$, i.e. $X = \kappa \tau$, $s = \eta \tau$, $gcd(\kappa, \eta) = 1$. \\
    $\implies \frac{(j-i)\eta}{\kappa} \in \mathbb{Z}$. \\
    $\implies \kappa$ divides $(j-i)$

    The rest of the proof remains the same.
\end{proof}

\begin{theorem}
    Suppose we have a strided pattern with stride $X$. $\lambda^s$ is the number of thread-blocks generated by poset tiling with stretch factor $s$. Then we have that $\lambda^s$ decreases as $\gcd(s, X)$ increases.
\end{theorem}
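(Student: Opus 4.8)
The plan is to reduce the claim to the monotonicity of two elementary quantities in terms of the integer $\kappa := X / \gcd(s, X)$. First I would invoke the stronger-form cover formula established above: for a strided pattern, a thread-block of size $m \times n$ with stretch factor $s$ whose anchor lies in the point-set $P$ covers exactly $f(\kappa)$ points of $P$, where
\[
f(\kappa) = \Big\lceil \tfrac{m-n+1}{\kappa} \Big\rceil\, n \;+\; \sum_{k=\lceil (m-n+1)/\kappa\rceil}^{\lceil (m-1)/\kappa\rceil} (m - k\kappa) \;+\; \sum_{k=1}^{\lceil (n-1)/\kappa\rceil} (n - k\kappa).
\]
Since poset tiling only ever anchors thread-blocks at currently-uncovered points, and such points always lie in $P$, every block it places covers exactly $f(\kappa)$ points of $P$. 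One then argues that for a strided point-set these covered sets partition $P$: because $P$ is the union of diagonals $\{x - y \equiv 0 \pmod X\}$, a block anchored at $(x_0,y_0)\in P$ covers precisely the points $(x_0+is,\,y_0+js)$ with $i\equiv j \pmod\kappa$, i.e.\ a coset of a fixed sublattice, and the anchors selected by $f_\top$ sit in distinct, non-overlapping windows of that sublattice. Hence $\lambda^s = |P|/f(\kappa)$ (or $\lceil |P|/f(\kappa)\rceil$ when $f(\kappa)$ does not divide $|P|$). It therefore suffices to prove (i) $f(\kappa)$ is non-increasing in $\kappa$ over the positive integers, and (ii) $\kappa = X/\gcd(s,X)$ is non-increasing in $\gcd(s,X)$; (ii) is immediate.

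For (i) I would rewrite $f$ so that monotonicity becomes transparent. Let $c(d) := \#\{(i,j)\in[0,n)\times[0,m) : j - i = d\}$ for $d\in\mathbb Z$; then $f(\kappa) = \sum_{k\in\mathbb Z} c(k\kappa)$, since the pairs counted by $f(\kappa)$ are exactly those whose difference $j-i$ is a multiple of $\kappa$. The function $c$ is the discrete correlation of two intervals, hence a trapezoid peaked at $d=0$: it is non-increasing on $\{d\ge 0\}$ and non-decreasing on $\{d\le 0\}$, vanishing outside $[-(n-1),\,m-1]$. So if $\kappa'\le\kappa$, then for every $k\ge 1$ we have $0 < k\kappa' \le k\kappa$, whence $c(k\kappa')\ge c(k\kappa)$; for every $k\le -1$ we have $k\kappa' \ge k\kappa$ (both negative, $k\kappa'$ nearer $0$), whence again $c(k\kappa')\ge c(k\kappa)$; and the $k=0$ terms coincide. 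Summing term by term gives $f(\kappa')\ge f(\kappa)$, i.e.\ $f$ is non-increasing in $\kappa$. Combining with the reduction above: as $\gcd(s,X)$ increases, $\kappa$ decreases, so $f(\kappa)$ does not decrease, so $\lambda^s = \lceil |P|/f(\kappa)\rceil$ does not increase.

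The step I expect to be the main obstacle is the non-overlap (partition) claim used to pin $\lambda^s$ to $|P|/f(\kappa)$: it requires verifying that the greedy anchor selection $f_\top$ never yields two anchors whose stretched $m\times n$ windows overlap on $P$, which needs a careful induction on the order in which $f_\top$ peels off the set $\top$ from the diagonal lattice. A fallback that sidesteps this entirely is to use only the covering lower bound $\lambda^s \ge |P|/f(\kappa)$ (each block covers at most $f(\kappa)$ points of $P$, the maximum being attained exactly when the anchor is lattice-aligned) together with the optimality of poset tiling for strided patterns established in Theorem~\ref{theorem:bound} to obtain the matching upper bound; the monotonicity argument for $f$ in step (i) is unaffected.
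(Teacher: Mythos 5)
Your proposal follows essentially the same route as the paper's proof: use the non-overlap of thread-blocks on a strided pattern to get $\lambda^s = |P|/f(\kappa)$, show $f$ is non-increasing in $\kappa$, and note $\kappa = X/\gcd(s,X)$ shrinks as the gcd grows. Your rewriting $f(\kappa)=\sum_{k\in\mathbb{Z}} c(k\kappa)$ with the trapezoidal correlation count $c$ is in fact a cleaner and more rigorous justification of the monotonicity step, which the paper only asserts with the (slightly sloppy) claim that ``all terms in $f(\kappa)$ are monotonically decreasing,'' and your flagging of the non-overlap claim identifies precisely the step the paper dispatches by assertion.
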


\begin{proof}
    By definition of strided pattern and poset tiling algorithm, we have no overlapping thread blocks. Thus, we can effectively compute $\lambda^s = \frac{|P|}{f(\kappa)}$. All terms in $f(\kappa)$ are monotonically decreasing with $\kappa$. $\lambda^s$ monotonically increases with $\kappa = \frac{X}{gcd(s, X)}$. Thus, $\lambda^s$ decreases as $gcd(s, X)$ increases.
\end{proof}

For strided pattern, $Cost(TB_{poset}) = \lambda^s s$. Since, $\lambda^s$ decreases as $gcd(s, X)$ increases. $Cost(TB_{poset})$ will achieve maximum value for some $s \in factors(X)$, thereby, reducing the search space of stretch factor.

\section{Proof of Theorem 1}
\label{supp:theorem1}

In Theorem 1, we make claims on the optimality bounds of the Poset Tiling Algorithm. In \ref{supp:bound-poly}, we prove the bound for the windowed and the blocked patterns. In \ref{supp:bound-strided}, we argue the optimality for the strided pattern.

\begin{theorem}
    Suppose we have a strided pattern with stride $X$. $\lambda^s$ is the number of thread-blocks generated by poset-tiling with stretch factor $s$. Then we have that $\lambda^s$ only decreases when $\gcd(s, X)$ increases.
\end{theorem}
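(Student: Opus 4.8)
The plan is to reduce the statement to the closed-form cover computation already established for strided patterns, namely that any thread-block $TB_i$ whose anchor lies in the mask satisfies $|Cov(TB_i)| = f(\kappa)$ with $\kappa = X/\gcd(s, X)$, where $f(\kappa)$ is the three-term expression appearing in the stronger-form lemma. First I would argue that, for a strided pattern, poset tiling produces no overlapping thread-blocks and in fact an exact partition of the point-set $P$: since every non-zero value in a row is isolated and consecutive rows are shifted by a constant amount, the cover of a placed thread-block is precisely the sublattice of mask points congruent to its anchor modulo $\kappa s$ within the block footprint, and the greedy selection of $\top$ over the remaining uncovered points walks through these residue classes without revisiting any. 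This yields the identity $\lambda^s \cdot f(\kappa) = |P|$, i.e. $\lambda^s = |P| / f(\kappa)$.

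Second, I would show that $f$ is monotonically non-increasing in $\kappa$. In $f(\kappa) = \lceil (m - n + 1)/\kappa \rceil\, n + \sum_{k = \lceil (m - n + 1)/\kappa \rceil}^{\lceil (m - 1)/\kappa \rceil} (m - k\kappa) + \sum_{k = 1}^{\lceil (n - 1)/\kappa \rceil} (n - k\kappa)$, each ceiling term $\lceil c/\kappa \rceil$ is non-increasing in $\kappa$, and in each sum both the number of summands and each individual summand $m - k\kappa$ or $n - k\kappa$ decrease as $\kappa$ grows; hence the whole expression is non-increasing in $\kappa$. Intuitively, making the stretch larger relative to the stride thins out how many mask points a fixed-size block can touch.

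Third, I would combine the two observations: as $\gcd(s, X)$ increases, $\kappa = X/\gcd(s, X)$ decreases (weakly), so $f(\kappa)$ increases (weakly), and therefore $\lambda^s = |P| / f(\kappa)$ decreases (weakly), which is exactly the claim. (One can also note as a corollary that since $Cost(TB_{poset}) = \lambda^s s$ for strided patterns, the minimizing stretch factor can be sought among divisors of $X$, matching the search-space reduction used in the stretch-factor selection routine.)

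The step I expect to be the main obstacle is the first one: rigorously justifying that poset tiling on a strided mask yields an overlap-free \emph{exact} cover of $P$, so that $\lambda^s$ is literally $|P| / f(\kappa)$ rather than merely bounded in terms of it. This requires showing that the greedy choice of $\top$ never strands "orphan" points that would force an extra, nearly-empty thread-block, and that argument leans essentially on the strict regularity of the strided pattern (constant intra-row stride and constant inter-row shift). Once that structural fact is secured, the monotonicity of $f$ and the concluding inequality are routine.
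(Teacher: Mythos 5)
Your proposal is correct and follows essentially the same route as the paper's own proof: both reduce $\lambda^s$ to $|P|/f(\kappa)$ via the no-overlap property of poset tiling on strided masks, invoke the closed-form cover $f(\kappa)$ with $\kappa = X/\gcd(s,X)$ from the preceding lemma, and conclude by monotonicity of $f$ in $\kappa$. The step you flag as the main obstacle (that the tiling is an exact, overlap-free cover) is precisely the point the paper also asserts with only a one-line appeal to the definition of the strided pattern, so your extra caution there is a refinement rather than a divergence.
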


We prove this by computing $\lambda^s$ as a monotonically decreasing function of $gcd(s, X)$.

\subsection{Structured Polygonal Patterns}
\label{supp:bound-poly}

\begin{definition}
    A mask $P$ is considered a structured dense polygon if a significantly large horizontal subsection of the mask consisting of $rh$ rows ($N - rh \ll N$) can be partitioned as $r$ repeating slabs of shape $h \times l$ recursively shifted by $l'$ columns.  \\
    $$ P = \bigg\{ \bigg( \bigg\lfloor \frac{y}{h} \bigg\rfloor l' + t,~ y \bigg) ~\bigg|~ t \in \mathbb{Z}_l,~ y \in \mathbb{Z}_{rh} \bigg\} $$
\end{definition}

While the mathematical definition looks restrictive, several transformers have sampling masks that can be parameterized by this definition. For example, the block pattern had $h=l'=w$ and $l=2w$ while the window pattern had $h=l'=1$ and $l=2w+1$.

\begin{definition}
    Let us consider a Naive Tiling algorithm that partitions the mask $P$ into patches where a patch is a region of $m$ consecutive rows. It then tiles these patches from left to right. \\
    We define $w^{(k)}$ as the width of the $k^{th}$ patch. If a patch overlaps with $t ~(\ge 1)$ slabs, its width is $l + (t - 1) l'$.
\end{definition}

\begin{theorem}
    Given a structured dense polygon mask $P$ with parameters $(r, l, h, l')$ such that $gcd(m, h) = \kappa$ (i.e., $m = \tau_m \kappa$, $h = \tau_h \kappa$) and $\tau_m = \alpha \tau_h + \beta$, $\beta \in \mathbb{Z}_{\tau_h}$, the Naive Tiling Algorithm will generate a TB of size $\lambda_{naive}$.
    \begin{equation*}
        \lambda_{naive} =
        \begin{cases}
        \bigg( \frac{r}{\tau_m} \bigg) \bigg( \tau_h \bigg\lceil \frac{l + (\alpha - 1) l'}{n} \bigg\rceil \bigg) & \beta = 0 \text{ (i.e., } h | m) \\
        \bigg( \frac{r}{\tau_m} \bigg) \bigg( (\beta - 1) \bigg\lceil \frac{l + (\alpha+1) l'}{n} \bigg\rceil ~+~ (\tau_h - \beta + 1) \bigg\lceil \frac{l + \alpha l'}{n} \bigg\rceil \bigg) & \beta \neq 0
        \end{cases}
    \end{equation*}
\end{theorem}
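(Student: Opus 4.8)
The plan is to reduce the statement to a counting problem over one ``super-period'' of consecutive patches, determine exactly how many slabs each patch of a super-period overlaps, and sum the resulting widths. First I would write $\lambda_{naive} = \sum_j \lceil w_j / n \rceil$, the sum ranging over all patches, where $w_j = l + (t_j - 1)l'$ and $t_j$ is the number of slabs patch $j$ overlaps; this uses that consecutive slabs shift by $l' < l$, so the column support of a patch touching $t$ consecutive slabs is a contiguous strip of width $l + (t-1)l'$. Patch $j$ occupies rows $\{jm, \dots, (j+1)m-1\}$ and slab boundaries lie at multiples of $h$, so $t_j = \lfloor ((j+1)m - 1)/h \rfloor - \lfloor jm/h \rfloor + 1$. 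Since $m = \tau_m \kappa$, $h = \tau_h \kappa$ with $\gcd(\tau_m, \tau_h) = 1$, the value $\lfloor jm/h \rfloor = \lfloor j\tau_m/\tau_h \rfloor$ depends on $j$ only through $j \bmod \tau_h$, hence so does $t_j$; a run of $\tau_h$ consecutive patches covers $\tau_h m = \tau_m h$ rows, i.e. exactly $\tau_m$ slabs, so the $rh$ rows of $P$ split into $r/\tau_m$ identical super-periods of $\tau_h$ patches each. (I assume $\tau_m \mid r$, which makes $r/\tau_m$, $rh/m$, and the super-period count integral; this is implicit in the statement.) It therefore suffices to count thread-blocks for $j = 0, \dots, \tau_h - 1$ and multiply by $r/\tau_m$.

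Next I would compute $\sum_{j=0}^{\tau_h - 1} t_j$ by a near-telescoping argument: reindexing, it equals $\lfloor (\tau_h m - 1)/h \rfloor - \lfloor 0 \rfloor + \tau_h + \sum_{k=1}^{\tau_h - 1}\bigl(\lfloor (km-1)/h \rfloor - \lfloor km/h \rfloor\bigr)$, and each summand of the last sum equals $-\mathbb{1}[h \mid km]$, which vanishes for $1 \le k \le \tau_h - 1$ because $h \mid km \iff \tau_h \mid k$ is impossible by coprimality. Since $\lfloor (\tau_h m - 1)/h \rfloor = \tau_m - 1$ (and the edge case $h = 1$ is benign here), we get $\sum_{j=0}^{\tau_h - 1} t_j = \tau_m + \tau_h - 1$. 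If $\beta = 0$, then $\gcd(\tau_m, \tau_h) = 1$ forces $\tau_h = 1$ (so $h \mid m$ and $m = \alpha h$); every patch coincides with a block of exactly $\alpha$ slabs, so $t_j = \alpha$ and $w_j = l + (\alpha-1)l'$, and with $r/\tau_m = r/\alpha$ patches this gives the first case of the formula.

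For $\beta \neq 0$ (so $\tau_h \ge 2$) I would first pin down $t_j \in \{\alpha+1, \alpha+2\}$ for every $j$: only patch $j = 0$ starts at a slab boundary (as $\tau_h \mid j\tau_m \iff \tau_h \mid j$), where it touches $\lceil m/h \rceil = \alpha + 1$ slabs; for $j \ne 0$ the patch starts at an offset $o = jm \bmod h \in \{\kappa, 2\kappa, \dots, (\tau_h-1)\kappa\}$ inside a slab, and writing $m = \alpha h + \beta\kappa$ yields $t_j = \alpha + \lceil (o + \beta\kappa)/h \rceil \in \{\alpha+1, \alpha+2\}$ since $0 < o + \beta\kappa < 2h$. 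Letting $n_2$ be the number of patches in a super-period with $t_j = \alpha+2$, the identity $n_2(\alpha+2) + (\tau_h - n_2)(\alpha+1) = \tau_m + \tau_h - 1$ forces $n_2 = \tau_m - \alpha\tau_h - 1 = \beta - 1$, so the remaining $\tau_h - \beta + 1$ patches have $t_j = \alpha+1$. Their widths are $l + (\alpha+1)l'$ and $l + \alpha l'$ respectively; summing $\lceil \cdot / n \rceil$ over a super-period and multiplying by $r/\tau_m$ produces the second case.

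The main obstacle is the per-patch analysis for $\beta \neq 0$: establishing $t_j$ exactly, rather than merely up to $\pm 1$, which requires careful bookkeeping of the offset $o$ modulo $h$ in terms of $\kappa$ and $\tau_h$ and the divisibility properties of $\tau_m$ against $\tau_h$. Alongside this I would dispatch the degenerate cases $h = 1$ and $\kappa = 1$ (where several floors collapse), and confirm from the algorithm's description that each patch is tiled independently left-to-right, so a patch of width $w$ genuinely contributes $\lceil w/n \rceil$ thread-blocks.
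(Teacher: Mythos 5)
Your proof is correct and follows essentially the same route as the paper's: decompose $P$ into $r/\tau_m$ periodic cross-sections of $\mathrm{lcm}(m,h)=\tau_m\tau_h\kappa$ rows, show each patch in a cross-section overlaps either $\alpha+1$ or $\alpha+2$ slabs (for $\beta\neq 0$), and solve the incidence identity $n_2(\alpha+2)+(\tau_h-n_2)(\alpha+1)=\tau_m+\tau_h-1$ to get $n_2=\beta-1$. The only (minor) differences are that you obtain the incidence total by a telescoping floor-function sum and an explicit offset analysis, where the paper counts slab/patch boundary crossings directly, and you treat $\alpha=0$ and $\alpha\neq 0$ uniformly rather than as separate subcases.
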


\begin{proof}
    $P$ ($rh$ rows) can be partitioned into $\frac{r}{\tau_m}$ horizontal cross-sections of $P$ with $lcm(m, h) = \tau_m \tau_h \kappa$ rows. \\
    Each horizontal cross-section will be covered by $\tau_h$ patches ($m$ rows) and $\tau_m$ slabs ($h$ rows). \\
    Therefore, $\lambda_{naive} = (\frac{r}{\tau_m}) (\sum_{i=1}^{\tau_h} \lceil \frac{w^{(i)}}{n} \rceil)$.

    If $\beta = 0$, i.e. $h | m$, we have $\kappa = h$, $\tau_m = \alpha$, $\tau_h = 1$. The cross-section is covered by 1 patch overlapping with $\tau_m$ slabs with width $(l + (\tau_m - 1) h)$ $[= (l + (\alpha - 1) l')]$. This proves the first case of the equation.

    If $\beta \neq 0$ and $\alpha = 0$ (i.e., $m < h$), we have $\tau_m = \beta$. Every patch will overlap with either 1 slab or 2 slabs. Since there are $\tau_m$ slabs, exactly $(\tau_m - 1)$ $[= (\beta - 1)]$ patches will overlap with 2 slabs and will have width $(l + l')$ $[= (l + (\alpha + 1) l')]$. Rest of the $(\tau_h - \tau_m + 1)$ $[= (\tau_h - \beta + 1)]$ patches will overlap with 1 slab and will have width $l$ $[= (l + \alpha l')]$. This proves the second case of the equation for $\alpha = 0$.

    If $\beta \neq 0$ and $\alpha \neq 0$ (i.e., $m > h$), every patch will overlap with either $(\alpha+1)$ slabs or $(\alpha+2)$ slabs. Since there are $\tau_h$ patches, exactly $(\tau_h - 1)$ slabs will overlap with 2 patches. \\
    Say $x$ patches overlap with $(\alpha + 2)$ slabs. We get $(x)(\alpha + 2) + (\tau_h - x)(\alpha + 1) = \tau_m + (\tau_h - 1)$. Solving for $x$ gives us $x = (\tau_m - \alpha \tau_h) - 1 = \beta - 1$. \\
    There are $(\beta - 1)$ patches that will overlap with $(\alpha + 2)$ slabs and will have width $(l + (\alpha + 1) l')$. Rest of the $(\tau_h - \beta + 1)$ patches will overlap with $(\alpha + 1)$ slab and will have width $(l + \alpha l')$. This proves the second case of the equation for $\alpha \neq 0$.
\end{proof}

\begin{theorem}
    For a given structured dense polygon mask $P$ with parameters $(r, l, h, l')$, Algorithm A and Tiling Algorithm generate $TB_A$ and $TB_{naive}$. We define $\lambda^{(k)}$ and $\lambda_{naive}^{(k)}$ as the number of thread blocks with anchored in first $k$ patches (km rows) for $TB_A$ and $TB_{naive}$. $\delta^{(k)}$ thread blocks in $TB_A$ are anchored in the first $k$ patches and aren't contained within the first $k$ patches, in other words, thread blocks which overflow. \\
    For any $k \ge 1$, $\lambda^{(k)} - \delta^{(k)} \ge \lambda_{naive}^{(k)} - \sum_{i=1}^{k} \lceil \frac{w^{(i)} - l}{n} \rceil$.
\end{theorem}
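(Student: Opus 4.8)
The plan is to prove the inequality by one direct counting argument — not by induction — that uses nothing about $TB_A$ beyond its being a valid tiling of $P$ by $m\times n$ stretch-$1$ thread-blocks (so it applies equally to poset tiling and to any optimal tiling, which is what is needed downstream to lower-bound $\lambda_{opt}$). The central objects will be the $k$ \emph{boundary rows} $y_j := (j-1)m$, $j=1,\dots,k$, i.e.\ the first rows of patches $1,\dots,k$; the slab structure of a structured dense polygon is exactly what makes the count work.

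The first step is the lower bound $\lambda^{(k)}-\delta^{(k)}\ge k\lceil l/n\rceil$, for which I would use two facts about stretch-$1$ thread-blocks. (i) A block's compute spans $m$ consecutive rows, so it meets at most one $y_j$, since the $y_j$ are pairwise at distance $\ge m$. (ii) If $y_j<rh$ (true for every $j\le k$ for which patch $j$ exists), then row $y_j$ of a structured dense polygon is exactly $l$ \emph{consecutive} columns, while a block meets a fixed row in at most $n$ consecutive columns; hence covering row $y_j$ needs at least $\lceil l/n\rceil$ blocks of $TB_A$. Let $B_j$ be the set of all blocks of $TB_A$ covering row $y_j$; by (i) the $B_j$ are pairwise disjoint, so $\big|\bigcup_{j=1}^k B_j\big|\ge k\lceil l/n\rceil$. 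Finally, any block in $\bigcup_j B_j$ is anchored at a row $r'\le y_j=(j-1)m\le (k-1)m$, hence anchored within the first $k$ patches, and (stretch $1$) reaches only to row $r'+m-1\le km-1$, hence is \emph{contained} in the first $k$ patches; so each such block is counted by $\lambda^{(k)}-\delta^{(k)}$, giving $\lambda^{(k)}-\delta^{(k)}\ge k\lceil l/n\rceil$.

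The second step bounds the right-hand side. A patch $i$ overlapping $t_i\ge1$ slabs has width $w^{(i)}=l+(t_i-1)l'\ge l$, so by sub-additivity of $x\mapsto\lceil x/n\rceil$ we get $\lceil w^{(i)}/n\rceil=\lceil (l+(w^{(i)}-l))/n\rceil\le\lceil l/n\rceil+\lceil (w^{(i)}-l)/n\rceil$, i.e.\ $\lceil w^{(i)}/n\rceil-\lceil (w^{(i)}-l)/n\rceil\le\lceil l/n\rceil$. Summing over $i=1,\dots,k$ and using $\lambda_{naive}^{(k)}=\sum_{i=1}^k\lceil w^{(i)}/n\rceil$ gives $\lambda_{naive}^{(k)}-\sum_{i=1}^k\lceil (w^{(i)}-l)/n\rceil\le k\lceil l/n\rceil$. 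Combining the two steps, $\lambda^{(k)}-\delta^{(k)}\ge k\lceil l/n\rceil\ge\lambda_{naive}^{(k)}-\sum_{i=1}^k\lceil (w^{(i)}-l)/n\rceil$, which is the claim.

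I expect the main obstacle to be the bookkeeping of which thread-blocks are ``anchored in'' versus ``contained in'' the first $k$ patches — in particular the boundary cases (a block anchored exactly at row $(k-1)m$, and a final patch that only partially overlaps the structured subsection of $P$) — together with making fact (ii) fully rigorous, since it relies on the cover of a block within a single row being one interval of at most $n$ columns. Everything else reduces to routine ceiling arithmetic and the definitions of $Comp$, $Cov$, $Anc$, $Str$ and of the structured dense polygon.
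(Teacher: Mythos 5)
Your proposal is correct, and it reaches the paper's inequality by a genuinely different (non-inductive) route. The paper proves the statement by induction on $k$: the base case observes that covering the first row of patch $1$ forces $\lambda^{(1)}-\delta^{(1)}\ge\lceil l/n\rceil$, and the inductive step credits the $\delta^{(k)}$ overflowing blocks with covering up to $\delta^{(k)}n$ columns of the first row of patch $k{+}1$, so that at least $\lceil (l-\delta^{(k)}n)/n\rceil=\lceil l/n\rceil-\delta^{(k)}$ fresh non-overflowing blocks are needed, after which the same ceiling subadditivity $\lceil x\rceil+\lceil y\rceil\ge\lceil x+y\rceil$ converts the accumulated $\lceil l/n\rceil$ terms into the naive counts. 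You instead establish the cleaner intermediate bound $\lambda^{(k)}-\delta^{(k)}\ge k\lceil l/n\rceil$ in one shot: each boundary row $y_j=(j-1)m$ is $l$ consecutive columns and needs at least $\lceil l/n\rceil$ blocks; a stretch-$1$ block spans exactly $m$ consecutive rows, so it meets at most one multiple of $m$, making the sets $B_j$ pairwise disjoint; and every such block is anchored in and contained in the first $k$ patches, so none contributes to $\delta^{(k)}$ — this removes the paper's $\delta^{(k)}$-crediting bookkeeping entirely, and the downstream comparison to $\lambda_{naive}^{(k)}$ is then a single application of subadditivity per patch. Unrolled, the paper's induction also yields $\lambda^{(k)}-\delta^{(k)}\ge k\lceil l/n\rceil$, so the two arguments rest on the same two facts, but yours is more transparent, and it usefully makes explicit that the bound holds for \emph{any} stretch-$1$ covering $TB_A$, which is exactly the generality the later theorem needs when taking $TB_A=TB_{opt}$. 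The caveats you flag are the same ones the paper leaves implicit (stretch factor $1$, justified for polygonal masks by Appendix A, and each patch's first row lying inside the structured $rh$-row subsection), so your argument is at least as rigorous as the original.
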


\begin{proof}
    We will prove the theorem by induction. \\
    We know that $\lambda_{naive}^{(k)} = \sum_{i=1}^{k} \lceil \frac{w^{(k)}}{n} \rceil$.
    
    Base Case: We need to show that $\lambda^{(1)} - \delta^{(1)} \ge \lambda_{naive}^{(1)} - \lceil \frac{w^{(1)} - l}{n} \rceil$. \\
    The minimum value of $\lambda^{(1)} - \delta^{(1)}$ is $\lceil \frac{l}{n} \rceil$ since the first row of the first patch needs to be covered by $TB_A$. \\
    \begin{equation*}
        \begin{aligned}
            \lambda^{(1)} - \delta^{(1)} &\ge \bigg\lceil \frac{l}{n} \bigg\rceil \\
                                         &\ge \bigg\lceil \frac{w^{(1)}}{n} \bigg\rceil - \bigg\lceil \frac{w^{(1)} - l}{n} \bigg\rceil &(\text{Since, } \lceil x \rceil + \lceil y \rceil \ge \lceil x+y \rceil ) \\
                                         &= \lambda_{naive}^{(1)} - \bigg\lceil \frac{w^{(1)} - l}{n} \bigg\rceil
        \end{aligned}
    \end{equation*}

    Hypothesis: Assuming $k \ge 1$, such that $\lambda^{(k)} - \delta^{(k)} \ge \lambda_{naive}^{(k)} - \sum_{i=1}^{k} \lceil \frac{w^{(i)} - l}{n} \rceil$.

    Inductive Step: We need to show that $\lambda^{(k+1)} - \delta^{(k+1)} \ge \lambda_{naive}^{(k+1)} - \sum_{i=1}^{k+1} \lceil \frac{w^{(i)} - l}{n} \rceil$. \\
    The minimum value of $\lambda^{(k+1)} - \delta^{(k+1)}$ is $\lambda^{(k)} + \lceil \frac{l - \delta^{(k)} n}{n} \rceil$ since the first row of the $(k+1)^{th}$ patch needs to be covered by $TB_A$. The only thread blocks that can cover this row are ones with the anchor in the first row of $(k+1)^{th}$ patch and the ones with the anchor in the $k^{th}$ patch but overflow.
    \begin{equation*}
        \begin{aligned}
            \lambda^{(k+1)} - \delta^{(k+1)} &\ge \lambda^{(k)} + \bigg\lceil \frac{l - \delta^{(k)} n}{n} \bigg\rceil \\
                                             &= \lambda^{(k)} - \delta^{(k)} + \bigg\lceil \frac{l}{n} \bigg\rceil \\
                                             &\ge \lambda_{naive}^{k} - \sum_{i=1}^{k} \bigg\lceil \frac{w^{(i)} - l}{n} \bigg\rceil + \bigg\lceil \frac{l}{n} \bigg\rceil &(\text{Using the Hypothesis}) \\
                                             &\ge \lambda_{naive}^{k} - \sum_{i=1}^{k} \lceil \frac{w^{(i)} - l}{n} \rceil + \bigg\lceil \frac{w^{(k+1)}}{n} \bigg\rceil - \bigg\lceil \frac{w^{(k+1)} - l}{n} \bigg\rceil &(\text{Since, } \lceil x \rceil + \lceil y \rceil \ge \lceil x+y \rceil ) \\
                                             &= \lambda_{naive}^{(k)} + \bigg\lceil \frac{w^{(k+1)}}{n} \bigg\rceil - \sum_{i=1}^{k+1} \bigg\lceil \frac{w^{(i)} - l}{n} \bigg\rceil \\
                                             &= \lambda_{naive}^{(k+1)} - \sum_{i=1}^{k+1} \bigg\lceil \frac{w^{(i)} - l}{n} \bigg\rceil
        \end{aligned}
    \end{equation*}
\end{proof}

\begin{theorem}
    For a given structured dense polygon mask $P$ with parameters $(r, l, h, l')$ such that $l' \approx h, l \gg m$, say the minimum number of thread blocks required to cover is $\lambda_{opt}$.
    \begin{equation*}
        \frac{\Delta \lambda}{\lambda_{opt}} = \frac{\lambda_{naive} - \lambda_{opt}}{\lambda_{opt}} \le \frac{m}{l}
    \end{equation*}
\end{theorem}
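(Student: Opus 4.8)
The plan is to combine the immediately preceding theorem — the inductive lower bound $\lambda^{(k)}-\delta^{(k)}\ge\lambda_{naive}^{(k)}-\sum_{i=1}^{k}\lceil\frac{w^{(i)}-l}{n}\rceil$ — with an independent counting lower bound on $\lambda_{opt}$. Write $N_p=\lceil rh/m\rceil$ for the number of patches used by Naive Tiling. I would first observe that the proof of that preceding theorem never uses any property of the arrangement beyond the fact that covering a full width-$l$ row of the $(k{+}1)$-st patch requires at least $\lceil l/n\rceil$ tiles, each of which is anchored in patch $k$ (and then overflows) or in patch $k{+}1$; hence the inequality applies verbatim to \emph{any} valid cover, in particular to the optimal one $TB_{opt}$. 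Instantiating it at $TB_A:=TB_{opt}$ and at $k=N_p$ (the last patch, so no block of $TB_{opt}$ overflows, i.e. $\delta^{(N_p)}=0$ and $\lambda^{(N_p)}=\lambda_{opt}$) gives
\[
\lambda_{opt}\ \ge\ \lambda_{naive}-\sum_{i=1}^{N_p}\Big\lceil\tfrac{w^{(i)}-l}{n}\Big\rceil,
\qquad\text{hence}\qquad
\Delta\lambda=\lambda_{naive}-\lambda_{opt}\ \le\ \sum_{i=1}^{N_p}\Big\lceil\tfrac{w^{(i)}-l}{n}\Big\rceil .
\]
Throughout I take the clean parameter regime $l'=h$ with $h\mid m$ (consistent with the windowed pattern $h=l'=1$, the blocked pattern $h=l'=w$, and the $\gcd(m,h)$ case analysis used above); the hypotheses $l'\approx h$, $l\gg m$ absorb the discrepancy in the general case.

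Next I would bound the excess widths and the optimum. By the definition of patch widths, $w^{(i)}-l=(t^{(i)}-1)l'$, where $t^{(i)}$ is the number of slabs patch $i$ meets; since a block of $m$ consecutive rows meets at most $\lceil m/h\rceil$ slab boundaries, $w^{(i)}-l\le\lceil m/h\rceil\, l'=m$, so each summand above is at most $\lceil m/n\rceil$ and $\Delta\lambda\le N_p\lceil m/n\rceil$. For the lower bound on $\lambda_{opt}$, consider the rows $y_j=jm$, $j=0,\dots,N_p-1$: a thread block spans $m$ consecutive rows and thus contains at most one $y_j$, while in row $y_j$ the non-zero columns form an interval of $l$ consecutive integers (one slab), which is covered only by a family of at least $\lceil l/n\rceil$ width-$n$ tiles, and these families are pairwise disjoint across $j$. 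Hence $\lambda_{opt}\ge N_p\lceil l/n\rceil$, and dividing,
\[
\frac{\Delta\lambda}{\lambda_{opt}}\ \le\ \frac{N_p\lceil m/n\rceil}{N_p\lceil l/n\rceil}\ =\ \frac{\lceil m/n\rceil}{\lceil l/n\rceil}\ \le\ \frac{m}{l},
\]
with the last step an equality when $n\mid m$ and $n\mid l$, and otherwise off by an $O(n/l)$ term made negligible by $l\gg m$.

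\textbf{Main obstacle.} The delicate point is the width bound $w^{(i)}-l\le m$: it is exactly true only under the divisibility $l'=h$, $h\mid m$ that the paper's $l'\approx h$, $l\gg m$ hypotheses and its choice of tile dimensions tacitly lean on; carrying the ceilings and the $l'\approx h$ slack through the argument in full generality turns the bound into $\frac{m}{l}\bigl(1+o(1)\bigr)$ rather than a clean $\frac{m}{l}$. A secondary (but essential) point is the observation above that the preceding theorem's induction is arrangement-agnostic, so that it may legitimately be evaluated at $TB_{opt}$ and not merely at the poset-tiling output; without this the chain $\lambda_{opt}\ge\lambda_{naive}-\sum\lceil\cdot\rceil$ does not follow.
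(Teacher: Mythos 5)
Your proof is correct (at the level of rigor of the paper's own argument) and follows essentially the same route: the key step — observing that the preceding inductive theorem is arrangement-agnostic, instantiating it at the optimal cover to get $\Delta\lambda \le \sum_i\lceil (w^{(i)}-l)/n\rceil$, and then dividing by a lower bound on $\lambda_{opt}$ of order $N_p\,l/n$ — is exactly what the paper does. The only differences are minor: the paper lower-bounds $\lambda_{opt}$ by the global count $Nl/(mn)$ rather than your per-row disjointness argument, and it bounds the excess widths via an explicit $h<m$ / $h>m$ case analysis instead of your uniform $w^{(i)}-l\le m$ in the $l'=h$, $h\mid m$ regime; both versions carry the same ceiling/$\approx$ slack that the hypotheses $l'\approx h$, $l\gg m$ are meant to absorb.
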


\begin{proof}
    We know that $\lambda_{opt} \ge \frac{Nl}{mn}$. \\
    From the previous theorem, we get that $\sum_{i=1}^{(N/m)} \lceil \frac{w^{(i)} - l}{n} \rceil \ge \lambda_{naive} - \lambda_{opt}$.

    When $h < m$, we have $\sum_{i=1}^{(N/m)} \lceil \frac{w^{(i)} - l}{n} \rceil \le \frac{N}{m} \frac{(\alpha + 1) l'}{n}$. \\
    Therefore, $\frac{\lambda_{naive} - \lambda_{opt}}{\lambda_{opt}} \le \frac{(\alpha + 1) l'}{l} \approx \frac{m}{l}$

    When $h > m$, we have $(\tau_m - 1)$ patches that overlap 2 slabs for every $\tau_h$ patches. In other words, for every $\tau_h$ patches, there are $(\tau_m - 1)$ patches such that $w^{(i)} - l$ is $l'$ while the rest of patches have $w^{(i)} - l$ as 0. \\
    Therefore, $\sum_{i=1}^{(N/m)} \lceil \frac{w^{(i)} - l}{n} \rceil = \frac{N}{m} \frac{(\tau_m-1) l'}{\tau_h n} \le \frac{N}{m} \frac{\tau_m l'}{\tau_h n} = \frac{Nl'}{hn}$. \\
    Therefore, $\frac{\lambda_{naive} - \lambda_{opt}}{\lambda_{opt}} \le \frac{l'm}{hl} \le \frac{m}{l}$
\end{proof}

Finally, we get $\frac{\lambda_{naive}}{\lambda_{opt}} \leq 1 + \frac{m}{l}$. We note that the Naive Tiling Algorithm is essentially the Poset Tiling Algorithm applied to each patch. Thus, we can argue that $\lambda_{poset} \le \lambda_{naive}$.

This gives us $\frac{\lambda_{poset}}{\lambda_{opt}} \leq 1 + \frac{m}{l}$. As explained in \S\ref{supp:stretch-poly}, the optimal stretch factor for polygonal patterns is 1, so we get $\phi_{CMR} = 1$.

Thus, we prove that $\frac{Cost(TB_{poset})}{Cost(TB_{opt})} \leq 1 + \frac{m}{l}$.

\subsection{Strided Patterns}
\label{supp:bound-strided}

By definition of strided pattern and poset tiling algorithm, we have no redundant compute ($\phi_{R}$) and minimum thread-divergence ($\phi_{TD}$). Thus, for a given stretch $s$, $\lambda_{opt} = \lambda_{poset}$. The stretch factor selection algorithm searches the space of the stretch factors and selects the stretch factor that minimizes our cost model. Thus, trivially, $Cost(TB_{opt}) = Cost(TB_{poset})$.

\end{document}